\newcommand\NoDo{\renewcommand\algorithmicdo{}}
\newcommand\NoThen{\renewcommand\algorithmicthen{}}
\DeclareMathOperator*{\argmax}{arg\,max}
\algnewcommand\algorithmicparfor{\textbf{parfor}}
\algnewcommand\algorithmicendparfor{\textbf{end\ parfor}}
\newtheorem{theorem}{Theorem}
\newtheorem{lemma}{Lemma}
\newtheorem{remark}{Remark}
\newtheorem{definition}{Definition}
  \def\my@tag@font{\normalsize}
  \def\maketag@@@#1{\hbox{\m@th\normalfont\my@tag@font#1}}
  \let\amsmath@eqref\eqref
  \renewcommand\eqref[1]{{\let\my@tag@font\relax\amsmath@eqref{#1}}}
\begin{document}
\title{Reducing Attack Opportunities Through Decentralized Event-Triggered Control}
\author{Paul Griffioen, \IEEEmembership{Student Member, IEEE}, Raffaele Romagnoli \IEEEmembership{Member, IEEE},\\Bruce H. Krogh \IEEEmembership{Fellow, IEEE}, and Bruno Sinopoli, \IEEEmembership{Fellow, IEEE}
\thanks{This work was supported in part by the National Science Foundation under Grant 1932530.
}
\thanks{P. Griffioen, R. Romagnoli, and B. H. Krogh are with the Department of Electrical and Computer Engineering, Carnegie Mellon University, Pittsburgh, PA, USA 15213. B. Sinopoli is with the Department of Electrical and Systems Engineering, Washington University in St. Louis, St. Louis, MO, USA 63130. Email: {\tt\small\{pgriffi1|rromagno|krogh\}@andrew.cmu.edu, bsinopoli@wustl.edu}}}

\maketitle

\begin{abstract}
Decentralized control systems are widely used in a number of situations and applications. In order for these systems to function properly and achieve their desired goals, information must be propagated between agents, which requires connecting to a network. To reduce opportunities for attacks that may be carried out through the network, we design an event-triggered mechanism for network connection and communication that minimizes the amount of time agents must be connected to the network, in turn decreasing communication costs. This mechanism is a function of only local information and ensures stability for the overall system in attack-free scenarios. Our approach distinguishes itself from current decentralized event-triggered control strategies by considering scenarios where agents are not always connected to the network to receive critical information from other agents and by considering scenarios where the communication graph is undirected and connected. An algorithm describing this network connection and communication protocol is provided, and our approach is illustrated via simulation.
\end{abstract}

\begin{IEEEkeywords}
Cyber-physical systems, networked control systems, communication networks, decision/estimation theory.
\end{IEEEkeywords}

\section{Introduction}
Cyber-physical systems, engineered systems which include sensing, communication, and control in physical spaces, are essential to secure and protect in today's society. Cyber-physical systems are ubiquitous in modern critical infrastructures including the smart grid, transportation systems, health care, sewage/water management, energy delivery, and manufacturing. These large scale, highly connected systems may be deployed in insecure public spaces and may contain heterogeneous components and devices, thus creating numerous attack surfaces. Consequently, these systems are attractive targets for adversaries, especially safety critical systems \cite{cardenas2008secure,slay2007lessons,case2016analysis,langner2011stuxnet}. Many of these systems are distributed over a wide area and are comprised of many different agents which interact with one another in a decentralized manner. It is important, therefore, to guarantee the safety and security of these decentralized control systems which rely heavily on network communication to achieve their goals.

In decentralized control systems, individual agents have access to differing amounts of information. In order to maintain the stability of the overall system and achieve a global objective, agents must occasionally communicate some subset of their local information with other agents, for instance in car platoons \cite{sun2003efficient,huang2016toward}. However, communicating with other agents requires connecting to the network, opening up the possibility for adversaries to corrupt information that is sent over the network and corrupt an agent's control software by using the network connection to inject malicious code \cite{szekeres2013sok}. These facts, in addition to the desire for minimizing communication costs, motivate the need for intermittent network connections as opposed to holding a constant network connection all the time. This is necessary in certain marine robotics applications where underwater vehicles must surface in order to share information with one another \cite{zolich2019survey}. Due to the fact that different sets of local information are available to each agent, the decision about when network connection and communication are necessary for a particular agent must be triggered locally. While intermittent network connections alone will not ensure resiliency against attacks, they reduce an adversary's window of opportunity for attack while also providing a framework in which a resilience strategy may be implemented.

Existing approaches to decentralized event-triggered control have mainly been concerned with minimizing communication costs, not with reducing opportunities for attacks that may be carried out through the network. Different approaches to event-triggered control are summarized well in \cite{heemels2012introduction}. A variety of decentralized event-triggered control mechanisms for linear, nonlinear, continuous time, and discrete time systems are presented in \cite{donkers2011networked,heemels2012periodic,donkers2011output,dolk2016output,heemels2013model,fu2020dynamic,fu2018decentralized} and provide conditions under which global asymptotic stability, global exponential stability, $\mathcal{L}_\infty$ gain performance, or $\mathcal{L}_p$ gain performance are achieved. All of these approaches assume that each agent is always connected to the network and is always available to receive any information that is sent to it, even though the agent might not always be broadcasting information to other agents. However, this assumption does not hold in contexts where safety and security are important since the attack window is minimized when each agent disconnects from the network for as long as possible.

In contrast to these previous approaches, we present a decentralized event-triggered network connection and communication protocol that does not assume that all agents are always connected to the network. The network connection and communication protocol ensures the stability of the overall system in attack-free scenarios when agents intermittently connect and disconnect from the network (as opposed to intermittently broadcasting information to other agents). This protocol uses trigger conditions based only on local information to determine when a particular agent must connect to the network to send and receive information from other agents. This trigger condition is designed to guarantee system stability by having an agent connect to the network when the magnitude of the state estimation error grows too large.

The main contributions of this article relative to our previous work in \cite{griffioen2020decentralized} and \cite{griffioen2021reducing} are as follows.
\begin{enumerate}
\item In contrast to \cite{griffioen2020decentralized}, we present a decentralized event-triggered network connection protocol that includes sensor measurements in the system model and does not require any reachability analysis to be implemented, which can be computationally costly.
\item In this article, only $2$ observer gain matrices need to be designed for each agent and only $2$ linear matrix inequalities (LMIs) need to be evaluated for the error dynamics. In contrast, $2^N-1$ observer gain matrices and $2^N-N$ LMIs are needed in \cite{griffioen2021reducing}, where $N$ is the number of agents, so this becomes computationally intractable with large numbers of agents.
\item This article is presented in continuous time with extensions to discrete time, unlike \cite{griffioen2020decentralized} and \cite{griffioen2021reducing} which are only presented in discrete time.
\item The restrictive assumption of a complete communication graph that is required in \cite{griffioen2020decentralized} and \cite{griffioen2021reducing} is relaxed in this article so that only the assumption of an undirected and connected communication graph is needed.
\end{enumerate}

The remainder of this article is organized as follows. Section II introduces the system model and estimation procedure that is used by each agent. Section III presents the triggering mechanism, network connection and communication protocol, and conditions under which stability of the overall system is achieved. Simulation results are presented in Section IV, and Section V concludes the article.

\section{Problem Formulation}

\subsection{Preliminaries}
A communication graph at time $t$ is denoted by $\mathcal{G}(t)=(\mathcal{N},\hat{\mathcal{A}}(t))$, where $\mathcal{N}=\{1,\cdots,N\}$ is a finite nonempty set of agents and $\hat{\mathcal{A}}(t)\in\mathbb{R}^{N\times N}$ represents the adjacency matrix at time $t$ whose $(i,j)^{th}$ element is given by $a_{ij}(t)\in\{0,1\}$. If $a_{ij}(t)=1$, agent $j$ is able to send information to agent $i$ at time $t$, and if $a_{ij}(t)=0$, agent $j$ is unable to send information to agent $i$ at time $t$. A communication graph is undirected at time $t$ if and only if $a_{ij}(t)=a_{ji}(t)$ $\forall i,j\in\mathcal{N}$. Self-connection is excluded so that $a_{ii}(t)=0$ $\forall i\in\mathcal{N}$, $\forall t$.

A path of length $\ell$ from agent $i$ to agent $j$ at time $t$ is a sequence $(i_0,\cdots,i_\ell)$ of agents such that $i_0=i$, $i_\ell=j$, and $a_{i_ki_{k-1}}(t)=1$ with distinct $i_k$'s. An undirected communication graph is connected at time $t$ if there is a path from $i$ to $j$ at time $t$ for two arbitrary distinct agents $i,j\in\mathcal{N}$.

The Laplacian of $\mathcal{G}(t)$ is the matrix $\mathcal{L}(t)$, where the $(i,j)^{th}$ element is given by $l_{ij}(t)$. The Laplacian is defined as $l_{ii}(t)\triangleq\sum_{j=1}^Na_{ij}(t)$ and $l_{ij}(t)\triangleq-a_{ij}(t)$, $i\neq j$. Note that $\mathcal{L}(t)1_N=0$ $\forall t$, where $1_N$ represents an $N\times1$ column vector comprised of all ones. By its construction, $\mathcal{L}(t)$ contains a zero eigenvalue with a corresponding eigenvector $1_N$, and all the other eigenvalues lie in the closed right-half complex plane. For undirected communication graphs at time $t$, both $\hat{\mathcal{A}}(t)$ and $\mathcal{L}(t)$ are symmetric and have real eigenvalues.

Let $\bar{\mathcal{L}}$ be the Laplacian for an undirected and connected communication graph. It is well known that $\bar{\mathcal{L}}$ has a simple zero eigenvalue, or in other words, all other eigenvalues are real positive. In this case, by Schur's lemma there exists a normal (real unitary) matrix $U\in\mathbb{R}^{N\times N}$ such that $U\bar{\mathcal{L}}U^T=\text{Diag}(0,\Lambda^+)$, where $\Lambda^+\in\mathbb{R}^{(N-1)\times(N-1)}$ is the diagonal matrix with entries corresponding to the positive eigenvalues of $\bar{\mathcal{L}}$. Given the properties of the Laplacian matrix, the first row of $U$ will be $\frac{1}{\sqrt{N}}1_N^T$. Then there is a matrix $S\in\mathbb{R}^{N\times(N-1)}$ such that
\begin{equation}
\medmuskip=3.56mu
\thinmuskip=3.56mu
\thickmuskip=3.56mu
U =
\begin{bmatrix}
\frac{1}{\sqrt{N}}1_N^T \\
S^T
\end{bmatrix}, ~
\begin{bmatrix}
\frac{1}{\sqrt{N}}1_N^T \\
S^T
\end{bmatrix} \bar{\mathcal{L}}
\begin{bmatrix}
\frac{1}{\sqrt{N}}1_N & S
\end{bmatrix} =
\begin{bmatrix}
0 & 0 \\
0 & \Lambda^+
\end{bmatrix}.
\end{equation}
In other words, there is a matrix $S\in\mathbb{R}^{N\times(N-1)}$ such that
\begin{equation}
1_N^TS=0, \quad S^TS=I_{N-1}, \quad S^T\bar{\mathcal{L}}S=\Lambda^+.
\end{equation}

\subsection{System Model}
We model the plant as a continuous time linear time invariant system composed of $N$ agents. The overall system dynamics are given by
\begin{align}
\label{StateDynamics1}
\dot{x}(t) &= Ax(t) + Bu(t) + w(t), \\
\label{Sensors1}
y(t) &= Cx(t) + v(t),
\end{align}
where $x(t)\in\mathbb{R}^n$ represents the state at time $t$, $u(t)\in\mathbb{R}^p$ denotes the control input, $y(t)\in\mathbb{R}^m$ represents the sensor measurements, and $w(t)\in\mathbb{R}^n$ and $v(t)\in\mathbb{R}^m$ are bounded disturbances which lie in the compact sets $W$ and $V$, respectively, given by
\begin{align}
W&\triangleq\left\{w(t)\middle|w(t)^TQw(t)\leq1\right\}, \\
V&\triangleq\left\{v(t)\middle|v(t)^TRv(t)\leq1\right\}.
\end{align}
We let $y_i(t)\in\mathbb{R}^{m_i}$ represent the sensor measurements that are locally available to agent $i$ so that $y(t)=\begin{bmatrix}y_1(t)^T&\cdots&y_N(t)^T\end{bmatrix}^T$, $m=\sum_{i=1}^Nm_i$, $C\triangleq\begin{bmatrix}C_1^T&\cdots&C_N^T\end{bmatrix}^T$, and $C_i\in\mathbb{R}^{m_i\times n}$. Similarly, we let $u_i(t)\in\mathbb{R}^{p_i}$ denote agent $i$'s control inputs so that $u(t)=\begin{bmatrix}u_1(t)^T&\cdots&u_N(t)^T\end{bmatrix}^T$, $p=\sum_{i=1}^Np_i$, $B\triangleq\begin{bmatrix}B_1&\cdots&B_N\end{bmatrix}$, and $B_i\in\mathbb{R}^{n\times p_i}$.

Each agent $i$ is able to directly access its own local sensor measurements $y_i(t)$ and control inputs $u_i(t)$ but must rely on communication from other agents to access any information locally available to other agents. We assume that the same underlying undirected and connected communication graph is always present, given by $\bar{\mathcal{G}}=(\mathcal{N},\bar{\mathcal{A}})$ with Laplacian matrix $\bar{\mathcal{L}}$ and with $\mathcal{N}_i$ denoting the set of agent $i$'s neighbors. When all agents are connected to the network, they are able to communicate with each another according to this underlying communication graph $\bar{\mathcal{G}}$. Consequently, if all agents are connected to the network at time $t$, $\mathcal{G}(t)=\bar{\mathcal{G}}$ so that $\hat{\mathcal{A}}(t)=\bar{\mathcal{A}}$ and $\mathcal{L}(t)=\bar{\mathcal{L}}$. If agent $i$ disconnects from the network at time $t$, it can no longer communicate with its neighbors, so $a_{ij}(t)=a_{ji}(t)=0$ $\forall j\in\mathcal{N}$. When agent $i$ connects to the network at time $t$, it is able to communicate with its neighbors who are also connected to the network at time $t$. Consequently, $a_{ij}(t)=a_{ji}(t)=1$ for all agents $j\in\mathcal{N}_i$ connected to the network at time $t$.

\subsection{State Estimation}
The control input for agent $i$ is given by
\begin{equation}
\label{ControlInput}
u_i(t) = K_i\hat{x}_i(t),
\end{equation}
where $\hat{x}_i(t)\in\mathbb{R}^n$ is agent $i$'s estimate of the overall state. Motivated by the distributed observer in \cite{kim2020decentralized}, this estimate is computed according to
\begin{equation}
\label{StateEstimateDynamics}
\begin{split}
\dot{\hat{x}}_i(t) &= A_{bk}\hat{x}_i(t) + \bar{L}_i(l_{ii}(t))(y_i(t)-C_i\hat{x}_i(t)) \\
&\quad+ \eta\sum_{j=1}^Na_{ij}(t)(\hat{x}_j(t)-\hat{x}_i(t)),
\end{split}
\end{equation}
where $A_{bk}\triangleq A+BK$, $K\triangleq\begin{bmatrix}K_1^T&\cdots&K_N^T\end{bmatrix}^T$, $\bar{L}_i(0)=\hat{L}_i$, $\bar{L}_i(\ell)=NL_i$ $\forall\ell\in\mathbb{Z}^+$, and $L\triangleq\begin{bmatrix}L_1&\cdots&L_N\end{bmatrix}$ with $L$ designed so that $A-LC$ is Hurwitz. Here $\hat{L}_i$ is agent $i$'s observer gain matrix, given by
\begin{equation}
\medmuskip=2.18mu
\thinmuskip=2.18mu
\thickmuskip=2.18mu
\label{ObserverDesign}
\hat{L}_i \triangleq T_i
\begin{bmatrix}
L_i^o \\
0
\end{bmatrix}, ~
T_i^{-1}AT_i =
\begin{bmatrix}
A_i^o & 0 \\
A_i^{21} & A_i^{\bar{o}}
\end{bmatrix}, ~
C_iT_i =
\begin{bmatrix}
C_i^o & 0
\end{bmatrix},
\end{equation}
where $L_i^o$ is designed so that $A_i^o-L_i^oC_i^o$ is Hurwitz, and $T_i$ is a similarity transformation matrix used to carry out the observability decomposition so that $(A_i^o,C_i^o)$ is observable. The observer presented in \eqref{StateEstimateDynamics} is simply a Luenberger observer that uses only local sensor measurements $y_i(t)$ when agent $i$ is not connected to the network. When agent $i$ is connected to the network, it broadcasts the current value of its state estimate to its neighbors, receives the current values of the state estimates of its neighbors who are connected to the network, and incorporates them into the coupling term of the observer, where $\eta\geq0$ is the coupling gain. By having agents intermittently connect and disconnect from the network, data is only sent over the network during intermittent intervals as opposed to being sent in a continuous stream over time. Note that agents do not need to have the same initial state estimate $\hat{x}_i(0)$.
\begin{remark}
Note that one of the benefits of this state estimation scheme compared to the one presented in \cite{griffioen2021reducing} is that each agent only uses $2$ observer gain matrices, $\bar{L}_i(0)=\hat{L}_i$ and $\bar{L}_i(\ell)=NL_i$ $\forall\ell\in\mathbb{Z}^+$. In contrast, $2^N-1$ observer gain matrices are needed in \cite{griffioen2021reducing}.
\end{remark}

\subsection{Problem Formulation}
Given the controller in \eqref{ControlInput}, the system dynamics in \eqref{StateDynamics1} can be written as
\begin{equation}
\label{StateDynamics2}
\dot{x}(t) = A_{bk}x(t) - Ee(t) + w(t),
\end{equation}
where $E\triangleq\begin{bmatrix}B_1K_1&\cdots&B_NK_N\end{bmatrix}$, $e(t)\triangleq\begin{bmatrix}e_1(t)^T&\cdots&e_N(t)^T\end{bmatrix}^T$, and $e_i(t)\triangleq x(t)-\hat{x}_i(t)$ so that $e(t)\in\mathbb{R}^{Nn}$ and $e_i(t)\in\mathbb{R}^n$. Here $e_i(t)$ represents the error between the overall state and agent $i$'s estimate of the overall state. We next introduce a network connection protocol which decides when it is necessary for each agent to connect to the network and communicate with other agents to ensure the stability of the overall system.

\section{Network Connection Protocol}

\subsection{Quadratic Boundedness}
In order to maintain the stability and safety of the overall system, each agent $i$ must occasionally connect to the network and share information with its neighboring agents. We would like to design a network connection protocol that ensures the stability of the overall system by properly coordinating communication between different agents while also minimizing the number of times each agent connects to the network. The mechanism that triggers this network connection is only able to use locally available information, which includes information that has been received from other agents through past communications. The stability which we design the network connection protocol to achieve is quadratic $\gamma$-boundedness which is described in Definition \ref{QuadraticBoundedness}, Definition \ref{RobustPositiveInvariance}, and Lemma \ref{EquivalenceLemma}. These definitions and lemmas have been uniquely modified and adapted from \cite{brockman1998quadratic} by adding the parameter $\gamma$ to not only specify \textit{when} the Lyapunov function decreases or increases but also \textit{how fast} it does so.
\begin{definition}[\cite{brockman1998quadratic}]
\label{QuadraticBoundedness}
Let $z(t)$ represent a state vector, let $d_1(t)$ and $d_2(t)$ represent disturbance vectors, and let $D_1$ and $D_2$ be compact sets. A system of the form
\begin{equation}
\label{QuadraticBoundSystem}
\medmuskip=0.84mu
\thinmuskip=0.84mu
\thickmuskip=0.84mu
\dot{z}(t) = \mathcal{A}z(t) + \mathcal{B}_1d_1(t) + \mathcal{B}_2d_2(t), ~ d_1(t)\in D_1,~d_2(t)\in D_2
\end{equation}
is quadratically $\gamma$-bounded with $\gamma\in\mathbb{R}$ and symmetric positive definite Lyapunov matrix $\mathcal{P}$ if and only if $\forall d_1(t)\in D_1$ and $\forall d_2(t)\in D_2$,
\begin{equation}
\medmuskip=3.73mu
\thinmuskip=3.73mu
\thickmuskip=3.73mu
z(t)^T\mathcal{P}z(t)\geq1\implies \frac{d}{dt}\left(z(t)^T\mathcal{P}z(t)\right)<\gamma z(t)^T\mathcal{P}z(t).
\end{equation}
\end{definition}
Note that when $\gamma\leq0$, $\gamma$ specifies how quickly the Lyapunov function decreases over time, and when $\gamma>0$, $\gamma$ sets an upper bound on how quickly the Lyapunov function can increase over time.
\begin{definition}[\cite{brockman1998quadratic}]
\label{RobustPositiveInvariance}
The set $Z$ is a robustly positively invariant set for \eqref{QuadraticBoundSystem} if and only if $z(0)\in Z$ implies that $z(t)\in Z$ $\forall t\geq0$, $\forall d_1(t)\in D_1$, and $\forall d_2(t)\in D_2$.
\end{definition}
\begin{lemma}[\cite{brockman1998quadratic}]
\label{EquivalenceLemma}
The following two statements are equivalent:
\begin{enumerate}
\item System \eqref{QuadraticBoundSystem} is quadratically $\gamma$-bounded with $\gamma\leq0$ and symmetric positive definite Lyapunov matrix $\mathcal{P}$.
\item The set $Z\triangleq\{z(t)^T\mathcal{P}z(t)\leq1,~\mathcal{P}\succ0\}$ is a robustly positively invariant set for \eqref{QuadraticBoundSystem}.
\end{enumerate}
\end{lemma}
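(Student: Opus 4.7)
The plan is to prove the two implications separately. For $(1) \Rightarrow (2)$ I will use a standard trajectory-crossing argument on the Lyapunov function $V(z) := z^T \mathcal{P} z$. For $(2) \Rightarrow (1)$ I will use the quadratic structure of $V$ and the linearity of the dynamics to lift the invariance condition on the boundary $\partial Z = \{z : V(z) = 1\}$ to all of the exterior $\{V \geq 1\}$.

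For $(1) \Rightarrow (2)$, I would fix $z(0) \in Z$ and any admissible disturbance signal, and argue by contradiction. Suppose the resulting trajectory leaves $Z$, so that $V(z(t^*)) > 1$ for some $t^* > 0$. Continuity of $V \circ z(\cdot)$ then yields a smallest $t_1 \in (0, t^*]$ with $V(z(t_1)) = 1$ and $\dot V(z(t_1)) \geq 0$. The hypothesis of quadratic $\gamma$-boundedness applied at $t_1$ gives $\dot V(z(t_1)) < \gamma\, V(z(t_1)) = \gamma \leq 0$, contradicting $\dot V(z(t_1)) \geq 0$. Hence $V(z(t)) \leq 1$ for all $t \geq 0$ and all admissible disturbances, so $Z$ is robustly positively invariant.

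For $(2) \Rightarrow (1)$, I would expand $\dot V(z) = z^T(\mathcal{A}^T \mathcal{P} + \mathcal{P} \mathcal{A}) z + 2 z^T \mathcal{P} (\mathcal{B}_1 d_1 + \mathcal{B}_2 d_2)$. Robust invariance of $Z$ forces $\sup_{d_1 \in D_1,\, d_2 \in D_2} \dot V(z) \leq 0$ for every $z \in \partial Z$. By compactness of $\partial Z \cap \{z : \|z\| \leq r\} \times D_1 \times D_2$ (with the quadratic level set bounded) and continuity of $\dot V$ in its arguments, this worst-case supremum is attained and uniformly nonpositive on $\partial Z$. A rescaling $\tilde z = z/\sqrt{V(z)} \in \partial Z$ then transfers this estimate to any $z$ with $V(z) \geq 1$: the quadratic part of $\dot V$ scales as $V(z)$ while the linear cross-term with disturbances scales only as $\sqrt{V(z)}$, so the nonpositive quadratic contribution dominates on the outer level sets, producing a $\gamma \leq 0$ valid throughout $\{V \geq 1\}$.

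The main obstacle I expect is reconciling the strict inequality $\dot V < \gamma V$ required by Definition~\ref{QuadraticBoundedness} with the non-strict inequality $\dot V \leq 0$ that invariance most naturally supplies on $\partial Z$. Recovering strictness will rely on the compactness of $D_1, D_2$ together with the linear-dynamics scaling described above: these promote the boundary nonpositivity into a uniform, genuinely negative exterior bound, yielding the desired $\gamma \leq 0$ and closing the equivalence.
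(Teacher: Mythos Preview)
The paper does not supply its own proof of this lemma; it is quoted with a citation to \cite{brockman1998quadratic} and used without argument. So there is nothing in the paper to compare against, and I assess your proposal on its own terms.

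Your argument for $(1)\Rightarrow(2)$ is sound. The gap is in $(2)\Rightarrow(1)$, precisely at the point you flag as the ``main obstacle.'' Your proposed fix---that compactness plus the scaling $\tilde z = z/\sqrt{V(z)}$ promotes the non-strict boundary estimate $\dot V\le 0$ on $\partial Z$ to a strict bound $\dot V<\gamma V$ on $\{V\ge 1\}$---does not work. Writing $a(\tilde z)=\tilde z^T(\mathcal A^T\mathcal P+\mathcal P\mathcal A)\tilde z$ and $\beta(\tilde z)=\sup_{d_1,d_2}2\tilde z^T\mathcal P(\mathcal B_1 d_1+\mathcal B_2 d_2)\ge 0$, robust invariance gives only $a(\tilde z)+\beta(\tilde z)\le 0$ on $\partial Z$; for $V(z)=c\ge 1$ your rescaling yields $\dot V(z)/V(z)\le a(\tilde z)+c^{-1/2}\beta(\tilde z)\le a(\tilde z)+\beta(\tilde z)\le 0$, which is still non-strict. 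A concrete obstruction: take $\mathcal A=0$, $\mathcal B_1=\mathcal B_2=0$. Every trajectory is constant, so $Z$ is trivially robustly positively invariant, yet $\dot V\equiv 0$ and Definition~\ref{QuadraticBoundedness} (which requires $\dot V<\gamma V$ strictly) fails for every $\gamma\le 0$. Thus the difficulty you identified is not a technicality that compactness and scaling resolve; with the strict inequality in Definition~\ref{QuadraticBoundedness}, the implication $(2)\Rightarrow(1)$ is simply false as stated. The equivalence in the cited reference is formulated with a non-strict inequality, and the paper's adaptation introducing the strict ``$<\gamma V$'' is what breaks this direction.
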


Given these definitions, Lemma \ref{QuadraticBoundLemma} provides a sufficient condition for evaluating the quadratic $\gamma$-boundedness of \eqref{QuadraticBoundSystem}.
\begin{lemma}
\label{QuadraticBoundLemma}
Let $D_1\triangleq\{d_1(t)|d_1(t)^T\mathcal{D}_1d_1(t)\leq1,~\mathcal{D}_1\succ0\}$ and $D_2\triangleq\{d_2(t)|d_2(t)^T\mathcal{D}_2d_2(t)\leq1,~\mathcal{D}_2\succ0\}$ for the system in \eqref{QuadraticBoundSystem}. If $\exists\alpha\geq0$ such that
\begin{equation}
\label{QuadraticLMI}
\begin{bmatrix}
(\gamma-2\alpha)\mathcal{P}-\mathcal{A}^T\mathcal{P}-\mathcal{P}\mathcal{A} & -\mathcal{P}\mathcal{B}_1 & -\mathcal{P}\mathcal{B}_2 \\
-\mathcal{B}_1^T\mathcal{P} & \alpha\mathcal{D}_1 & 0 \\
-\mathcal{B}_2^T\mathcal{P} & 0 & \alpha\mathcal{D}_2
\end{bmatrix}\succ0,
\end{equation}
then the system in \eqref{QuadraticBoundSystem} is quadratically $\gamma$-bounded with symmetric positive definite Lyapunov matrix $\mathcal{P}$.
\end{lemma}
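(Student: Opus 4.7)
The plan is an S-procedure / Lyapunov argument in which the LMI \eqref{QuadraticLMI} is read as the matrix form of a single quadratic expression in $\xi \triangleq [z^T,d_1^T,d_2^T]^T$. First I would compute the derivative of the candidate Lyapunov function $V(z)\triangleq z^T\mathcal{P}z$ along trajectories of \eqref{QuadraticBoundSystem}, which gives
\begin{equation*}
\dot V = z^T(\mathcal{A}^T\mathcal{P}+\mathcal{P}\mathcal{A})z + 2z^T\mathcal{P}\mathcal{B}_1 d_1 + 2z^T\mathcal{P}\mathcal{B}_2 d_2.
\end{equation*}

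Expanding the block product $\xi^T M \xi$ for the matrix $M$ appearing in \eqref{QuadraticLMI} and substituting the expression for $\dot V$, the hypothesis $M\succ 0$ becomes
\begin{equation*}
\gamma z^T\mathcal{P}z - \dot V \;>\; \alpha\bigl(2\, z^T\mathcal{P}z - d_1^T\mathcal{D}_1 d_1 - d_2^T\mathcal{D}_2 d_2\bigr)
\end{equation*}
for every $(z,d_1,d_2)\neq 0$. Invoking the hypotheses of Definition \ref{QuadraticBoundedness}, whenever $z^T\mathcal{P}z\geq 1$ and $d_i\in D_i$ (so that $d_i^T\mathcal{D}_i d_i\leq 1$), the right-hand side is at least $\alpha(2-1-1)=0$. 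Since $\alpha\geq 0$, this yields $\dot V < \gamma z^T\mathcal{P}z$, which is precisely quadratic $\gamma$-boundedness with Lyapunov matrix $\mathcal{P}$.

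I expect the only delicate point to be the bookkeeping of the S-procedure: a single scalar $\alpha$ must simultaneously absorb both disturbance constraints, which is exactly what produces the coefficient $(\gamma-2\alpha)\mathcal{P}$ on the $(1,1)$ diagonal block of \eqref{QuadraticLMI}, while the state-space constraint $z^T\mathcal{P}z\geq 1$ is handled implicitly through the resulting factor of $2$ on $z^T\mathcal{P}z$ rather than by an additional multiplier. A secondary check is that strict positivity of the LMI propagates to the strict inequality required in Definition \ref{QuadraticBoundedness}; this is immediate because the hypothesis $z^T\mathcal{P}z\geq 1$ forces $z\neq 0$, hence $\xi\neq 0$ regardless of $d_1$ and $d_2$.
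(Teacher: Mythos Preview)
Your proposal is correct and follows essentially the same S-procedure argument as the paper: the paper rewrites \eqref{QuadraticLMI} as the implication ``$-2z^T\mathcal{P}z+d_1^T\mathcal{D}_1d_1+d_2^T\mathcal{D}_2d_2\leq 0 \Rightarrow \dot V<\gamma z^T\mathcal{P}z$'' and then observes that $z^T\mathcal{P}z\geq 1$ together with $d_i\in D_i$ forces the antecedent, which is exactly your direct expansion of $\xi^T M\xi>0$ followed by the bound $2z^T\mathcal{P}z-d_1^T\mathcal{D}_1d_1-d_2^T\mathcal{D}_2d_2\geq 0$. Your remark that $z^T\mathcal{P}z\geq 1$ guarantees $\xi\neq 0$, hence the strict inequality, is the one point the paper leaves implicit.
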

\begin{proof}
By using the S-procedure \cite{boyd1994linear}, \eqref{QuadraticLMI} is equivalent to
\begin{equation}
\medmuskip=-0.65mu
\thinmuskip=-0.65mu
\thickmuskip=-0.65mu
\begin{split}
&\begin{bmatrix}
z(t) \\ d_1(t) \\ d_2(t)
\end{bmatrix}^T
\begin{bmatrix}
-2\mathcal{P} & 0 & 0 \\
0 & \mathcal{D}_1 & 0 \\
0 & 0 & \mathcal{D}_2
\end{bmatrix}
\begin{bmatrix}
z(t) \\ d_1(t) \\ d_2(t)
\end{bmatrix}\leq0 ~\implies \\
&\begin{bmatrix}
z(t) \\ d_1(t) \\ d_2(t)
\end{bmatrix}^T
\begin{bmatrix}
\mathcal{A}^T\mathcal{P}+\mathcal{P}\mathcal{A}-\gamma\mathcal{P} & \mathcal{P}\mathcal{B}_1 & \mathcal{P}\mathcal{B}_2 \\
\mathcal{B}_1^T\mathcal{P} & 0 & 0 \\
\mathcal{B}_2^T\mathcal{P} & 0 & 0
\end{bmatrix}
\begin{bmatrix}
z(t) \\ d_1(t) \\ d_2(t)
\end{bmatrix}<0,
\end{split}
\end{equation}
which in turn is equivalent to
\begin{equation}
\label{Implication1}
\medmuskip=3.57mu
\thinmuskip=3.57mu
\thickmuskip=3.57mu
\begin{split}
&-2z(t)^T\mathcal{P}z(t) + d_1(t)^T\mathcal{D}_1d_1(t) + d_2(t)^T\mathcal{D}_2d_2(t) \leq 0 \\
&\hspace{2.27cm} \implies \frac{d}{dt}\left(z(t)^T\mathcal{P}z(t)\right) < \gamma z(t)^T\mathcal{P}z(t).
\end{split}
\end{equation}
Note that
\begin{equation}
\label{Implication2}
\medmuskip=2.94mu
\thinmuskip=2.94mu
\thickmuskip=2.94mu
\begin{split}
&\left\{\begin{matrix}
z(t)^T\mathcal{P}z(t)\geq1\\
d_1(t)^T\mathcal{D}_1d_1(t)\leq1\\
d_2(t)^T\mathcal{D}_2d_2(t)\leq1
\end{matrix}\right\} \implies \\
&-2z(t)^T\mathcal{P}z(t) + d_1(t)^T\mathcal{D}_1d_1(t) + d_2(t)^T\mathcal{D}_2d_2(t) \leq 0.
\end{split}
\end{equation}
Taking \eqref{Implication1} and \eqref{Implication2} in conjunction with one another yields that $\forall d_1(t)\in D_1$ and $\forall d_2(t)\in D_2$,
\begin{equation}
\medmuskip=3.73mu
\thinmuskip=3.73mu
\thickmuskip=3.73mu
z(t)^T\mathcal{P}z(t) \geq 1 \implies \frac{d}{dt}\left(z(t)^T\mathcal{P}z(t)\right) < \gamma z(t)^T\mathcal{P}z(t),
\end{equation}
implying that the system in \eqref{QuadraticBoundSystem} is quadratically $\gamma$-bounded with symmetric positive definite Lyapunov matrix $\mathcal{P}$.
\end{proof}

Lemma \ref{StateConvergenceLemma} provides a sufficient condition under which the overall system is quadratically $0$-bounded and the state remains in the robust positive invariant set $\mathcal{E}_x$ given by
\begin{equation}
\label{StateInvariantSet}
\mathcal{E}_x\triangleq\left\{x(t)\middle|x(t)^TPx(t)\leq1,~P\succ0\right\}.
\end{equation}
\begin{lemma}
\label{StateConvergenceLemma}
If $\exists\alpha_1\geq0$ such that
\begin{equation}
\label{StateLMI}
\begin{bmatrix}
-2\alpha_1P-A_{bk}^TP-PA_{bk} & PE & -P \\
E^TP & \alpha_1\bar{P} & 0 \\
-P & 0 & \alpha_1Q
\end{bmatrix}\succ0,
\end{equation}
then the system in \eqref{StateDynamics2} is quadratically $0$-bounded with symmetric positive definite Lyapunov matrix $P$ when $e(t)\in\mathcal{E}_e$, where $\mathcal{E}_e$ is given by
\begin{equation}
\label{ErrorInvariantSet}
\mathcal{E}_e\triangleq\left\{e(t)\middle|e(t)^T\bar{P}e(t)\leq1,~\bar{P}\succ0\right\}.
\end{equation}
Furthermore, if $x(0)\in\mathcal{E}_x$, then $x(t)\in\mathcal{E}_x$ $\forall t\geq0$.
\end{lemma}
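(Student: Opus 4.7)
The plan is to recognize that Lemma \ref{StateConvergenceLemma} is a direct specialization of Lemma \ref{QuadraticBoundLemma} to the error-driven dynamics \eqref{StateDynamics2}, followed by an appeal to Lemma \ref{EquivalenceLemma} to obtain the invariance claim. So the proof should be short: all the heavy lifting was done when Lemma \ref{QuadraticBoundLemma} was proved via the S-procedure.

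First I would match \eqref{StateDynamics2} to the generic form \eqref{QuadraticBoundSystem} by identifying $z(t)=x(t)$, $\mathcal{A}=A_{bk}$, $\mathcal{B}_1=-E$, $\mathcal{B}_2=I$, $d_1(t)=e(t)$, and $d_2(t)=w(t)$. The disturbance sets become $D_1=\mathcal{E}_e$ (with $\mathcal{D}_1=\bar{P}$) and $D_2=W$ (with $\mathcal{D}_2=Q$), since by hypothesis $e(t)\in\mathcal{E}_e$ and by assumption $w(t)\in W$. Taking $\gamma=0$, $\mathcal{P}=P$, and $\alpha=\alpha_1$ in \eqref{QuadraticLMI}, the blocks reduce to $(\gamma-2\alpha)\mathcal{P}-\mathcal{A}^T\mathcal{P}-\mathcal{P}\mathcal{A}=-2\alpha_1P-A_{bk}^TP-PA_{bk}$, $-\mathcal{P}\mathcal{B}_1=PE$, $-\mathcal{P}\mathcal{B}_2=-P$, $\alpha\mathcal{D}_1=\alpha_1\bar{P}$, and $\alpha\mathcal{D}_2=\alpha_1Q$. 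These are exactly the entries appearing in \eqref{StateLMI}, so the hypothesis of Lemma \ref{QuadraticBoundLemma} is satisfied and \eqref{StateDynamics2} is quadratically $0$-bounded with symmetric positive definite Lyapunov matrix $P$.

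For the second conclusion I would invoke Lemma \ref{EquivalenceLemma} with $\gamma=0\le 0$, $\mathcal{P}=P$, and $Z=\mathcal{E}_x$. Since quadratic $0$-boundedness has just been established, the equivalence in Lemma \ref{EquivalenceLemma} implies that $\mathcal{E}_x$ is robustly positively invariant for \eqref{StateDynamics2} for every $e(t)\in\mathcal{E}_e$ and $w(t)\in W$. Unpacking Definition \ref{RobustPositiveInvariance}, this is precisely the statement that $x(0)\in\mathcal{E}_x$ implies $x(t)\in\mathcal{E}_x$ for all $t\ge 0$, which closes the argument.

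There is essentially no technical obstacle: the only item requiring care is the sign bookkeeping $\mathcal{B}_1=-E$, so that the cross-term $-\mathcal{P}\mathcal{B}_1=PE$ (rather than $-PE$) ends up in the $(1,2)$ block of \eqref{StateLMI}, and similarly for the $(2,1)$ block. Everything else is a one-to-one substitution into results already proved in the excerpt.
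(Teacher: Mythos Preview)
Your proposal is correct and follows essentially the same route as the paper's own proof: apply Lemma~\ref{QuadraticBoundLemma} to \eqref{StateDynamics2} with the identifications you list (including the sign bookkeeping $\mathcal{B}_1=-E$) to obtain quadratic $0$-boundedness, then invoke Lemma~\ref{EquivalenceLemma} together with Definition~\ref{RobustPositiveInvariance} for the invariance of $\mathcal{E}_x$. The paper's proof is simply a terse two-sentence version of exactly this argument.
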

\begin{proof}
Applying Lemma \ref{QuadraticBoundLemma} to the system in \eqref{StateDynamics2} implies that if \eqref{StateLMI} is satisfied, then the system in \eqref{StateDynamics2} is quadratically $0$-bounded. Lemma \ref{EquivalenceLemma} and Definition \ref{RobustPositiveInvariance} imply that if $x(0)\in\mathcal{E}_x$, then $x(t)\in\mathcal{E}_x$ $\forall t\geq0$.
\end{proof}

According to \eqref{StateDynamics1}, \eqref{Sensors1}, \eqref{ControlInput}, and \eqref{StateEstimateDynamics}, the error dynamics for agent $i$ are given by
\begin{equation}
\begin{split}
\dot{e}_i(t) &= (A_{bk}-\bar{L}_i(l_{ii}(t))C_i)e_i(t) + w(t) \\
&\quad - \bar{L}_i(l_{ii}(t))v_i(t) - \sum_{j=1}^N\left(B_jK_j+\eta l_{ij}(t)\right)e_j(t).
\end{split}
\end{equation}
The error dynamics for the overall system are then given by
\begin{equation}
\medmuskip=-0.9mu
\thinmuskip=-0.9mu
\thickmuskip=-0.9mu
\label{ErrorDynamics}
\dot{e}(t) = \underbrace{\left(F(\mathcal{L}(t))-\eta(\mathcal{L}(t)\otimes I_n)\right)}_{A_e(\mathcal{L}(t))}e(t) + \mathcal{I}w(t) - J(\mathcal{L}(t))v(t),
\end{equation}
where $\mathcal{I}\triangleq\begin{bmatrix}I_n&\cdots&I_n\end{bmatrix}^T$, $J(\mathcal{L}(t))\triangleq\text{BlkDiag}(\bar{L}_1(l_{11}(t)),\cdots,\bar{L}_N(l_{NN}(t)))$, and $F(\mathcal{L}(t))\triangleq$
\begin{equation*}
\footnotesize
\medmuskip=-0.37mu
\thinmuskip=-0.37mu
\thickmuskip=-0.37mu
\begin{bmatrix}
A_{bk}-\bar{L}_1(l_{11}(t))C_1-B_1K_1 & \cdots & -B_NK_N \\
\vdots & \ddots & \vdots \\
-B_1K_1 & \cdots & A_{bk}-\bar{L}_N(l_{NN}(t))C_N-B_NK_N
\end{bmatrix}.
\end{equation*}

Lemma \ref{ErrorInvarianceLemma} provides a sufficient condition under which the error is quadratically $\gamma(\mathcal{L}(t))$-bounded, where $\gamma$ is a function of the specific configuration of the communication graph at time $t$.
\begin{lemma}
\label{ErrorInvarianceLemma}
If $\exists\alpha_2\geq0$ such that
\begin{equation}
\footnotesize
\medmuskip=-0.67mu
\thinmuskip=-0.67mu
\thickmuskip=-0.67mu
\label{ErrorLMI}
\begin{bmatrix}
(\gamma(\mathcal{L}(t))-2\alpha_2)\bar{P}-A_e(\mathcal{L}(t))^T\bar{P}-\bar{P}A_e(\mathcal{L}(t)) & -\bar{P}\mathcal{I} & \bar{P}J(\mathcal{L}(t)) \\
-\mathcal{I}^T\bar{P} & \alpha_2Q & 0 \\
J(\mathcal{L}(t))^T\bar{P} & 0 & \alpha_2R
\end{bmatrix}\succ0,
\end{equation}
then the error in \eqref{ErrorDynamics} is quadratically $\gamma(\mathcal{L}(t))$-bounded with symmetric positive definite Lyapunov matrix $\bar{P}$.
\end{lemma}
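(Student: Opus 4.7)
The plan is to observe that Lemma~\ref{ErrorInvarianceLemma} is essentially a direct instantiation of Lemma~\ref{QuadraticBoundLemma} applied to the error dynamics in \eqref{ErrorDynamics}, and to make the identifications between the two systems explicit so that the LMI \eqref{ErrorLMI} is seen to be exactly \eqref{QuadraticLMI} evaluated at the error-dynamics data.

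First I would set up the mapping from the abstract system \eqref{QuadraticBoundSystem} to \eqref{ErrorDynamics} by identifying $z(t)\leftarrow e(t)$, $\mathcal{A}\leftarrow A_e(\mathcal{L}(t))$, $\mathcal{B}_1\leftarrow\mathcal{I}$, $\mathcal{B}_2\leftarrow -J(\mathcal{L}(t))$, $d_1(t)\leftarrow w(t)$, $d_2(t)\leftarrow v(t)$, and $\mathcal{P}\leftarrow\bar{P}$. The disturbance sets required in Lemma~\ref{QuadraticBoundLemma} are $D_1=\{d_1 : d_1^T\mathcal{D}_1 d_1\leq 1\}$ and $D_2=\{d_2 : d_2^T\mathcal{D}_2 d_2\leq 1\}$, which under these identifications become exactly the sets $W$ and $V$ defined earlier, with $\mathcal{D}_1\leftarrow Q$ and $\mathcal{D}_2\leftarrow R$. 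Finally, the scalar $\gamma$ in \eqref{QuadraticLMI} is taken as $\gamma(\mathcal{L}(t))$ and $\alpha\leftarrow\alpha_2$.

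Next I would verify that under these substitutions the LMI \eqref{QuadraticLMI} specializes term-by-term to \eqref{ErrorLMI}. The only sign issue to track is that $\mathcal{B}_2=-J(\mathcal{L}(t))$, which yields $-\mathcal{P}\mathcal{B}_2=\bar{P}J(\mathcal{L}(t))$ in the upper-right block and $-\mathcal{B}_2^T\mathcal{P}=J(\mathcal{L}(t))^T\bar{P}$ in the lower-left block, matching \eqref{ErrorLMI} exactly. Every other entry is a direct substitution.

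With this verification in place, Lemma~\ref{QuadraticBoundLemma} applies and directly yields that \eqref{ErrorDynamics} is quadratically $\gamma(\mathcal{L}(t))$-bounded with Lyapunov matrix $\bar{P}$, completing the proof. There is no real obstacle here; the proof is simply a reduction to the already-established general lemma, and the main task is to write the identifications carefully enough that the reader can check the sign convention on $J(\mathcal{L}(t))$ without having to re-derive the S-procedure argument.
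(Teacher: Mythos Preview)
Your proposal is correct and follows essentially the same approach as the paper: the paper's proof is a one-line invocation of Lemma~\ref{QuadraticBoundLemma} applied to \eqref{ErrorDynamics}, and your explicit identifications (including the sign check on $\mathcal{B}_2=-J(\mathcal{L}(t))$) simply spell out that invocation in more detail than the paper does.
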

\begin{proof}
Applying Lemma \ref{QuadraticBoundLemma} to the system in \eqref{ErrorDynamics} implies that if \eqref{ErrorLMI} is satisfied, then the error in \eqref{ErrorDynamics} is quadratically $\gamma(\mathcal{L}(t))$-bounded.
\end{proof}

As shown in \cite{kim2020decentralized}, in the case where all agents are connected to the network, we can decompose the error $e(t)$ into its average $\bar{e}(t)\triangleq\frac{1}{N}\mathcal{I}^Te(t)$ and the rest $\tilde{e}(t)\triangleq(S^T\otimes I_n)e(t)$ so that
\begin{equation}
e(t) =
\underbrace{\begin{bmatrix}
\mathcal{I} & S\otimes I_n
\end{bmatrix}}_{\bar{E}}
\underbrace{\begin{bmatrix}
\bar{e}(t) \\
\tilde{e}(t)
\end{bmatrix}}_{\hat{e}(t)}.
\end{equation}
When all the agents connect to the network, the error dynamics for the overall system are given by
\begin{equation}
\medmuskip=-1.65mu
\thinmuskip=-1.65mu
\thickmuskip=-1.65mu
\label{ErrorDynamics2}
\begin{split}
\begin{bmatrix}
\dot{\bar{e}}(t) \\
\dot{\tilde{e}}(t)
\end{bmatrix} &=
\underbrace{\begin{bmatrix}
A-LC & H(S\otimes I_n) \\
(S^T\otimes I_n)\bar{F}\mathcal{I} & (S^T\otimes I_n)\bar{F}(S\otimes I_n)-\eta(\Lambda^+\otimes I_n)
\end{bmatrix}}_{\bar{A}_e}
\begin{bmatrix}
\bar{e}(t) \\
\tilde{e}(t)
\end{bmatrix} \\
&\quad+\underbrace{\begin{bmatrix}
I_n \\
(S^T\otimes I_n)\mathcal{I}
\end{bmatrix}}_{\bar{W}} w(t) -
\underbrace{\begin{bmatrix}
L \\
(S^T\otimes I_n)\bar{J}
\end{bmatrix}}_{\bar{V}} v(t),
\end{split}
\end{equation}
where $\bar{J}\triangleq N\text{BlkDiag}(L_1,\cdots,L_N)$,
\begin{gather*}
\medmuskip=-0.25mu
\thinmuskip=-0.25mu
\thickmuskip=-0.25mu
H \triangleq
\begin{bmatrix}
\frac{1}{N}A_{bk}-B_1K_1-L_1C_1 & \cdots & \frac{1}{N}A_{bk}-B_NK_N-L_NC_N
\end{bmatrix}, \\
\medmuskip=-1.54mu
\thinmuskip=-1.54mu
\thickmuskip=-1.54mu
\bar{F} \triangleq
\begin{bmatrix}
A_{bk}-NL_1C_1-B_1K_1 & \cdots & -B_NK_N \\
\vdots & \ddots & \vdots \\
-B_1K_1 & \cdots & A_{bk}-NL_NC_N-B_NK_N
\end{bmatrix}.
\end{gather*}
Since $\Lambda^+$ is positive definite, the coupling gain $\eta$ can be designed with a sufficiently large value to ensure that the lower right block of $\bar{A}_e$ is negative definite. This ensures that the diagonal blocks of $\bar{A}_e$ only contain eigenvalues that lie in the left-half plane since $L$ has been designed to make $A-LC$ Hurwitz.

Lemma \ref{ErrorInvarianceLemma2} provides a sufficient condition under which the error is quadratically $\gamma(\bar{\mathcal{L}})$-bounded when all the agents connect to the network.
\begin{lemma}
\label{ErrorInvarianceLemma2}
If $\exists\alpha_3\geq0$ such that
\begin{equation}
\scriptsize
\medmuskip=-3.18mu
\thinmuskip=-3.18mu
\thickmuskip=-3.18mu
\label{ErrorLMI2}
\begin{bmatrix}
(\gamma(\bar{\mathcal{L}})-2\alpha_3)\bar{E}^T\bar{P}\bar{E}-\bar{A}_e^T\bar{E}^T\bar{P}\bar{E}-\bar{E}^T\bar{P}\bar{E}\bar{A}_e & -\bar{E}^T\bar{P}\bar{E}\bar{W} & \bar{E}^T\bar{P}\bar{E}\bar{V} \\
-\bar{W}^T\bar{E}^T\bar{P}\bar{E} & \alpha_3Q & 0 \\
\bar{V}^T\bar{E}^T\bar{P}\bar{E} & 0 & \alpha_3R
\end{bmatrix}\succ0,
\end{equation}
then the error in \eqref{ErrorDynamics} is quadratically $\gamma(\bar{\mathcal{L}})$-bounded with symmetric positive definite Lyapunov matrix $\bar{P}$ when all the agents are connected to the network.
\end{lemma}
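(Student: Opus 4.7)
The plan is to apply Lemma \ref{QuadraticBoundLemma} to the transformed error dynamics \eqref{ErrorDynamics2}, and then transfer the resulting quadratic $\gamma(\bar{\mathcal{L}})$-boundedness back to the original coordinates via the change of basis $\bar{E}$. The key observation is that when all agents are connected, $l_{ii}(t)\geq1$ for every $i$, so $\bar{L}_i(l_{ii}(t))=NL_i$, which is exactly what turns $F(\mathcal{L}(t))$ and $J(\mathcal{L}(t))$ in \eqref{ErrorDynamics} into the $\bar{F}$ and $\bar{J}$ that appear inside $\bar{A}_e$ and $\bar{V}$ in \eqref{ErrorDynamics2}.

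First I would verify that $\bar{E}$ is invertible, so that $\bar{E}^T\bar{P}\bar{E}$ is a legitimate symmetric positive definite Lyapunov matrix. Using $1_N^TS=0$, $S^TS=I_{N-1}$, and the mixed-product property of Kronecker products, a direct computation gives
\begin{equation*}
\bar{E}^T\bar{E}=\begin{bmatrix}\mathcal{I}^T\mathcal{I} & \mathcal{I}^T(S\otimes I_n)\\ (S^T\otimes I_n)\mathcal{I} & (S^TS)\otimes I_n\end{bmatrix}=\begin{bmatrix}NI_n & 0\\ 0 & I_{(N-1)n}\end{bmatrix}\succ0,
\end{equation*}
so $\bar{E}$ has full rank and $\bar{E}^T\bar{P}\bar{E}\succ0$ whenever $\bar{P}\succ0$.

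Next I would apply Lemma \ref{QuadraticBoundLemma} directly to \eqref{ErrorDynamics2} with the substitutions $\mathcal{A}=\bar{A}_e$, $\mathcal{B}_1=\bar{W}$, $\mathcal{B}_2=-\bar{V}$, $\mathcal{P}=\bar{E}^T\bar{P}\bar{E}$, $\mathcal{D}_1=Q$, $\mathcal{D}_2=R$, $d_1(t)=w(t)\in W$, $d_2(t)=v(t)\in V$, $\gamma=\gamma(\bar{\mathcal{L}})$, and $\alpha=\alpha_3$. The two off-diagonal sign flips from $\mathcal{B}_2=-\bar{V}$ combine to produce the $+\bar{E}^T\bar{P}\bar{E}\bar{V}$ and $+\bar{V}^T\bar{E}^T\bar{P}\bar{E}$ blocks, so the resulting LMI is exactly \eqref{ErrorLMI2}. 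Its feasibility therefore implies that \eqref{ErrorDynamics2} is quadratically $\gamma(\bar{\mathcal{L}})$-bounded with Lyapunov matrix $\bar{E}^T\bar{P}\bar{E}$.

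Finally I would lift the conclusion from the $\hat{e}$ coordinates to the $e$ coordinates. Since $e(t)=\bar{E}\hat{e}(t)$ with $\bar{E}$ constant and invertible, we have the identities $e(t)^T\bar{P}e(t)=\hat{e}(t)^T(\bar{E}^T\bar{P}\bar{E})\hat{e}(t)$ and $\frac{d}{dt}(e(t)^T\bar{P}e(t))=\frac{d}{dt}(\hat{e}(t)^T(\bar{E}^T\bar{P}\bar{E})\hat{e}(t))$, so the implication defining quadratic $\gamma(\bar{\mathcal{L}})$-boundedness in one coordinate system is equivalent to the same implication in the other. The main obstacle is the bookkeeping: checking the $\bar{L}_i(l_{ii}(t))=NL_i$ reduction so that $A_e(\bar{\mathcal{L}})=\bar{E}\bar{A}_e\bar{E}^{-1}$ (in the connected-graph case), keeping track of the sign of $\mathcal{B}_2$ when matching the LMIs, and confirming $\bar{E}$'s invertibility; once these are in place, the lemma is a direct coordinate-change corollary of Lemma \ref{QuadraticBoundLemma}.
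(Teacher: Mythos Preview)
Your proposal is correct and follows essentially the same route as the paper's proof: apply Lemma~\ref{QuadraticBoundLemma} to the transformed dynamics \eqref{ErrorDynamics2} with Lyapunov matrix $\bar{E}^T\bar{P}\bar{E}$, then use the identity $\hat{e}(t)^T\bar{E}^T\bar{P}\bar{E}\hat{e}(t)=e(t)^T\bar{P}e(t)$ to transfer the conclusion to the $e$ coordinates. Your version simply fills in details the paper leaves implicit (the invertibility of $\bar{E}$, the sign bookkeeping for $\mathcal{B}_2=-\bar{V}$, and the $\bar{L}_i(l_{ii}(t))=NL_i$ reduction), but the argument is the same.
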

\begin{proof}
Applying Lemma \ref{QuadraticBoundLemma} to the system in \eqref{ErrorDynamics2} implies that if \eqref{ErrorLMI2} is satisfied, then the error $\hat{e}(t)$ in \eqref{ErrorDynamics2} is quadratically $\gamma(\bar{\mathcal{L}})$-bounded with symmetric positive definite Lyapunov matrix $\bar{E}^T\bar{P}\bar{E}$. Since $\hat{e}(t)^T\bar{E}^T\bar{P}\bar{E}\hat{e}(t)=e(t)^T\bar{P}e(t)$, this is equivalent to the error in \eqref{ErrorDynamics} being quadratically $\gamma(\bar{\mathcal{L}})$-bounded with symmetric positive definite Lyapunov matrix $\bar{P}$ when all the agents are connected to the network.
\end{proof}

\subsection{Stability Conditions}
We want to ensure that $x(t)$ converges to $\mathcal{E}_x$ by creating a network connection protocol which guarantees that when the Lyapunov function $V(x(t))\triangleq x(t)^TPx(t)$ is greater than or equal to $1$, it decreases over time and converges to the robust positive invariant set $\mathcal{E}_x$. The invariance of $\mathcal{E}_x$ is shown in Lemma \ref{EquivalenceLemma} to be equivalent to quadratic $0$-boundedness. Consequently, the network connection protocol should guarantee that when $V(x(t))\geq1$, $\dot{V}(x(t))<0$ $\forall t$. The following theorem, motivated by \cite{donkers2011networked} and \cite{heemels2012periodic}, sets forth sufficient conditions under which $\dot{V}(x(t))<0$ $\forall t$.
\begin{theorem}
\label{QuadraticBoundednessTheorem}
If the network connection protocol ensures that for some $i\in\{1,\cdots,N\}$,
\begin{equation}
\label{SatisfactionCriteria}
-y_i(t)^TY_iy_i(t)+e(t)^T\bar{P}e(t)+w(t)^TQw(t)+v(t)^TRv(t)<0,
\end{equation}
where $Y_i\succ0$, and if $\forall i\in\{1,\cdots,N\}$,
\begin{equation}
\medmuskip=2.46mu
\thinmuskip=2.46mu
\thickmuskip=2.46mu
\label{LMICondition}
\begin{bmatrix}
-A_{bk}^TP-PA_{bk}-C_i^TY_iC_i & PE & -P & -C_i^TY_i\Gamma_i \\
E^TP & \bar{P} & 0 & 0 \\
-P & 0 & Q & 0 \\
-\Gamma_i^TY_iC_i & 0 & 0 & R-\Gamma_i^TY_i\Gamma_i
\end{bmatrix} \succeq 0,
\end{equation}
where $\Gamma_i\triangleq\begin{bmatrix}0_{m_i\times\sum_{j=1}^{i-1}m_j}&I_{m_i}&0_{m_i\times\sum_{j=i+1}^Nm_j}\end{bmatrix}$, then $\dot{V}(x(t))<0$ $\forall t$ for the system in \eqref{StateDynamics2}.
\end{theorem}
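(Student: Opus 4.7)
The plan is to pre- and post-multiply the LMI \eqref{LMICondition} by a suitable vector, and then identify the resulting quadratic forms with $\dot V(x(t))$ and $y_i(t)^T Y_i y_i(t)$ so that the satisfaction criterion \eqref{SatisfactionCriteria} immediately delivers the conclusion.

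First I would differentiate $V(x(t))=x(t)^TPx(t)$ along the closed-loop dynamics \eqref{StateDynamics2} to obtain
\begin{equation*}
\dot V(x(t)) = x(t)^T(A_{bk}^TP+PA_{bk})x(t) - 2x(t)^TPEe(t) + 2x(t)^TPw(t).
\end{equation*}
Then, letting $z(t)=\begin{bmatrix}x(t)^T & e(t)^T & w(t)^T & v(t)^T\end{bmatrix}^T$ and writing $M_i$ for the matrix in \eqref{LMICondition}, I would form the nonnegative quadratic $z(t)^T M_i z(t)\ge 0$. Expanding block by block, the $(1,1)$, $(1,2)$, and $(1,3)$ blocks contribute exactly $-\dot V(x(t))$; the $(2,2)$, $(3,3)$, and the $R$-part of the $(4,4)$ block contribute the positive terms $e(t)^T\bar P e(t)+w(t)^TQw(t)+v(t)^TRv(t)$; and the remaining $Y_i$-containing terms from the $(1,1)$, $(1,4)$, and $(4,4)$ blocks sum to $-x(t)^TC_i^TY_iC_ix(t)-2x(t)^TC_i^TY_i\Gamma_iv(t)-v(t)^T\Gamma_i^TY_i\Gamma_iv(t)$.

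The key recognition step is that, since $v_i(t)=\Gamma_iv(t)$ and $y_i(t)=C_ix(t)+v_i(t)$, those three $Y_i$-terms complete the square as $-y_i(t)^TY_iy_i(t)=-(C_ix(t)+\Gamma_iv(t))^TY_i(C_ix(t)+\Gamma_iv(t))$. Thus $z(t)^TM_iz(t)\ge 0$ rewrites as
\begin{equation*}
\dot V(x(t)) \le -y_i(t)^TY_iy_i(t) + e(t)^T\bar Pe(t) + w(t)^TQw(t) + v(t)^TRv(t),
\end{equation*}
and invoking \eqref{SatisfactionCriteria} for the index $i$ selected by the network connection protocol gives $\dot V(x(t))<0$.

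I expect the main obstacle to be largely bookkeeping: keeping signs straight when expanding $z(t)^TM_iz(t)$, since several off-diagonal blocks of $M_i$ are negative, and verifying that the three $Y_i$-terms combine cleanly into a single squared term without a leftover cross term. Apart from this, the argument is a direct specialization of the S-procedure reasoning used in Lemma \ref{QuadraticBoundLemma}, now set up so that the locally measurable output $y_i(t)$ appears explicitly as the quantity that drives the trigger.
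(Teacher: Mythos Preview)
Your proposal is correct and follows essentially the same approach as the paper: expand the LMI \eqref{LMICondition} as a quadratic form in $\begin{bmatrix}x(t)^T & e(t)^T & w(t)^T & v(t)^T\end{bmatrix}^T$, recognize that the $Y_i$-terms complete the square to $-y_i(t)^TY_iy_i(t)$ via $y_i(t)=C_ix(t)+\Gamma_iv(t)$, obtain the bound \eqref{LyapunovBound}, and then invoke \eqref{SatisfactionCriteria}. The paper's own proof is slightly more terse but performs exactly this computation.
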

\begin{proof}
The condition in \eqref{LMICondition} is equivalent to
\begin{equation}
\begin{split}
&w(t)^TQw(t)-(A_{bk}x(t)-Ee(t)+w(t))^TPx(t)+\\
&v(t)^TRv(t)-x(t)^TP(A_{bk}x(t)-Ee(t)+w(t))\\
&\hspace{2.77cm}+e(t)^T\bar{P}e(t)-y_i(t)^TY_iy_i(t)\geq0,
\end{split}
\end{equation}
which is equivalent to
\begin{equation}
\label{LyapunovBound}
\begin{split}
\dot{V}(x(t))&\leq-y_i(t)^TY_iy_i(t)+e(t)^T\bar{P}e(t)\\
&\quad+w(t)^TQw(t)+v(t)^TRv(t).
\end{split}
\end{equation}
Taking \eqref{LyapunovBound} in conjunction with the condition in \eqref{SatisfactionCriteria} ensures that $\dot{V}(x(t))<0$ $\forall t$ for the system in \eqref{StateDynamics2} when \eqref{LMICondition} is satisfied $\forall i\in\{1,\cdots,N\}$ and \eqref{SatisfactionCriteria} is satisfied for some $i\in\{1,\cdots,N\}$.
\end{proof}

Note that $P$ determines the size of the invariant set $\mathcal{E}_x$ to which the state converges, $\bar{P}$ determines the size of the invariant set $\mathcal{E}_e$ to which the error converges, and $Y_i$ will have a direct impact on the frequency at which agent $i$ connects to the network as will be seen in \eqref{TriggerCondition}. Maximizing $\log\det{P}$ is proportional to minimizing the volume of $\mathcal{E}_x$, compressing the size of the invariant set to which the state converges. Maximizing $\log\det{\bar{P}}$ is proportional to minimizing the volume of $\mathcal{E}_e$, compressing the size of the invariant set to which the error converges. Maximizing $\log\det{Y_i}$ is proportional to maximizing $y_i(t)^TY_iy_i(t)$ $\forall y_i(t)$, which minimizes the number of times agent $i$ connects to the network as will be seen in \eqref{TriggerCondition}. Consequently, the desired values for $P$, $\bar{P}$, and $Y_i$ are obtained according to the following optimization problem
\begin{equation}
\medmuskip=1.62mu
\thinmuskip=1.62mu
\thickmuskip=1.62mu
\label{OptimizationProblem}
\begin{split}
&\argmax_{\alpha_1,\alpha_3,P,\bar{P},Y_1,\cdots,Y_N}\omega_x\log\det{P}+\omega_e\log\det{\bar{P}}+\sum_{i=1}^N\omega_i\log\det{Y_i}\\
&\text{s.t.}~\gamma(\bar{\mathcal{L}})\leq0,~~\alpha_1,\alpha_3\geq0,~~P\succ0,~~\bar{P}\succ0,\\
&\hspace{0.55cm}Y_i\succ0~~\forall i\in\{1,\cdots,N\},~~\text{\eqref{StateLMI}, \eqref{ErrorLMI2}, and \eqref{LMICondition} are satisfied,}
\end{split}
\end{equation}
where $\omega_x$, $\omega_e$, and $\omega_i$, $i\in\{1,\cdots,N\}$ are nonnegative constants chosen by the designer to weight the importance of minimizing $\mathcal{E}_x$, $\mathcal{E}_e$, and the communication frequency of agent $i$, respectively. The constraint $\gamma(\bar{\mathcal{L}})\leq0$ is included in this optimization problem because we would like the error to decrease when all agents are connected to the network as will be seen in Theorem \ref{TriggerTheorem}. Because this optimization problem is not convex, a suboptimal solution may be obtained by restricting the possible values of $\alpha_1$ and $\alpha_3$ to a finite set and carrying out the optimization problem in \eqref{OptimizationProblem} over that finite set, since \eqref{OptimizationProblem} is convex for set values of $\alpha_1$ and $\alpha_3$.

\subsection{Triggering Conditions}
The following theorem leverages the results of Lemmas \ref{StateConvergenceLemma}-\ref{ErrorInvarianceLemma2} and Theorem \ref{QuadraticBoundednessTheorem}, providing a set of network connection triggering conditions for each agent based on \eqref{SatisfactionCriteria} to ensure that $x(t)$ converges to $\mathcal{E}_x$.
\begin{theorem}
\label{TriggerTheorem}
Let $\{\bar{t}_k\}$, $k\in\mathbb{Z}^+$ represent the sequence of time instants where the configuration of the communication graph changes due to agents connecting or disconnecting from the network. We let $\bar{t}_0\triangleq0$. If $\exists\alpha_1,\alpha_3\geq0$ such that \eqref{StateLMI} and \eqref{ErrorLMI2} are satisfied with $\gamma(\bar{\mathcal{L}})\leq0$, if $e(0)\in\mathcal{E}_e$, if \eqref{LMICondition} is satisfied $\forall i\in\{1,\cdots,N\}$, and if agent $i$ connects to the network when
\begin{equation}
\medmuskip=2.27mu
\thinmuskip=2.27mu
\thickmuskip=2.27mu
\label{TriggerCondition}
y_i(t)^TY_iy_i(t) \leq 2 + \max\left(1,e^{\gamma(\bar{t}_k)(t-\bar{t}_k)+\sum_{j=0}^{k-1}\gamma(\bar{t}_j)(\bar{t}_{j+1}-\bar{t}_j)}\right),
\end{equation}
for all $t\in[\bar{t}_k,\bar{t}_{k+1})$, $\forall k\in\mathbb{Z}^+$, where
\begin{equation}
\label{GammaDefinition}
\begin{split}
\gamma(\bar{t}_k) &\triangleq \min_{\alpha_2,\gamma(\mathcal{L}(t))}\gamma(\mathcal{L}(t)) \\
&\quad~\text{s.t.}~\alpha_2\geq0,~\text{\eqref{ErrorLMI} is satisfied}~\forall t\in[\bar{t}_k,\bar{t}_{k+1}),
\end{split}
\end{equation}
then $x(t)$ will converge to $\mathcal{E}_x$ as long as agent $i$ stays connected to the network while
\begin{equation}
\label{StayConnected}
\begin{split}
\int_0^t\bigg(&2 + \max\left(1,e^{\gamma(\bar{t}_k)(\varphi-\bar{t}_k)+\sum_{j=0}^{k-1}\gamma(\bar{t}_j)(\bar{t}_{j+1}-\bar{t}_j)}\right) \\
&- y_i(\varphi)^TY_iy_i(\varphi)\bigg)d\varphi > f(t)
\end{split}
\end{equation}
for all $t\in[\bar{t}_k,\bar{t}_{k+1})$, $\forall k\in\mathbb{Z}^+$ when both agent $i$ and all its neighbors are connected to the network, where $f(t):\mathbb{R}\to\mathbb{R}$ is any strictly decreasing function such that $f(t)\leq0$ $\forall t\geq0$.
\end{theorem}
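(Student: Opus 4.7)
The plan is to show that $\dot V(x(t)) < 0$ whenever $V(x(t)) \geq 1$, from which convergence of $x(t)$ to $\mathcal{E}_x$ follows since $\mathcal{E}_x$ is robustly positively invariant by Lemma \ref{StateConvergenceLemma}. I would organize the argument into a bound on the error Lyapunov function, an application of Theorem \ref{QuadraticBoundednessTheorem}, and a handling of the all-connected case.

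First, I would establish by induction on $k$ the error bound
\begin{equation*}
e(t)^T \bar{P} e(t) \leq \max\!\left(1,\, e^{\gamma(\bar{t}_k)(t - \bar{t}_k) + \sum_{j=0}^{k-1} \gamma(\bar{t}_j)(\bar{t}_{j+1} - \bar{t}_j)}\right)
\end{equation*}
for all $t \in [\bar{t}_k, \bar{t}_{k+1})$. The base case $k=0$ is immediate from $e(0) \in \mathcal{E}_e$. For the inductive step, Lemma \ref{ErrorInvarianceLemma} gives quadratic $\gamma(\bar{t}_k)$-boundedness of \eqref{ErrorDynamics} on the current interval, so by Definition \ref{QuadraticBoundedness} a standard comparison argument yields $e(t)^T \bar{P} e(t) \leq e^{\gamma(\bar{t}_k)(t - \bar{t}_k)}\, e(\bar{t}_k)^T \bar{P} e(\bar{t}_k)$ whenever the Lyapunov function is at least $1$, and otherwise the leading $1$ in the max dominates. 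Folding the inductive hypothesis on $e(\bar{t}_k)^T \bar{P} e(\bar{t}_k)$ into the exponent produces the desired cumulative form.

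Second, I would combine this error bound with the disturbance constraints $w(t)^T Q w(t) \leq 1$ and $v(t)^T R v(t) \leq 1$. For any agent $i$ whose trigger condition \eqref{TriggerCondition} is violated (i.e., agent $i$ remains disconnected at time $t$), $y_i(t)^T Y_i y_i(t) > 2 + \max(1, e^{\cdots})$ together with the error bound gives
\begin{equation*}
-y_i(t)^T Y_i y_i(t) + e(t)^T \bar{P} e(t) + w(t)^T Q w(t) + v(t)^T R v(t) < 0,
\end{equation*}
which is precisely \eqref{SatisfactionCriteria}. Theorem \ref{QuadraticBoundednessTheorem} together with \eqref{LMICondition} then yields $\dot V(x(t)) < 0$. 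When instead every agent is connected, Lemma \ref{ErrorInvarianceLemma2} with $\gamma(\bar{\mathcal{L}}) \leq 0$ guarantees that $e(t)^T \bar{P} e(t)$ cannot grow, and the stay-connected rule \eqref{StayConnected} with strictly decreasing $f(t) \leq 0$ keeps each connected agent connected long enough for the threshold $2 + \max(1, e^{\cdots})$ to contract; eventually some $y_i(t)^T Y_i y_i(t)$ must exceed it and the previous case applies.

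Hence $\dot V(x(t)) < 0$ whenever $V(x(t)) \geq 1$, so $V(x(t))$ decreases until $x(t)$ enters $\mathcal{E}_x$, which is invariant by Lemma \ref{StateConvergenceLemma}. The main obstacle I anticipate is the inductive error bound in the first step: the interplay between the switching times $\bar{t}_k$, the possibly sign-indefinite $\gamma(\bar{t}_k)$, and the $\max(1,\cdot)$ operator requires a careful case split depending on whether the Lyapunov function crosses $1$ inside the current interval. A secondary subtlety is justifying, from the integral form of \eqref{StayConnected}, that no Zeno-like accumulation of switching times occurs that would invalidate the cumulative exponent.
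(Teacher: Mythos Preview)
Your first two steps---the inductive error bound and the reduction to \eqref{SatisfactionCriteria} via Theorem~\ref{QuadraticBoundednessTheorem} when at least one agent's trigger is \emph{not} satisfied---track the paper's argument closely and are sound.

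The gap is in your handling of the all-connected case. You argue that the threshold $2+\max(1,e^{\cdots})$ contracts and that ``eventually some $y_i(t)^T Y_i y_i(t)$ must exceed it,'' returning you to the previous case. But that threshold is bounded below by $3$, and nothing forces any $y_i(t)^T Y_i y_i(t)$ to exceed $3$: if $x(t)$ is already small (or in $\mathcal{E}_x$), all outputs can remain small indefinitely while every agent's trigger \eqref{TriggerCondition} stays satisfied. So you cannot conclude $\dot V(x(t))<0$ on this branch by reverting to Theorem~\ref{QuadraticBoundednessTheorem}.

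The paper closes this case differently. When all agents are connected it splits on whether $e(t)\in\mathcal{E}_e$. If $e(t)\in\mathcal{E}_e$, Lemma~\ref{StateConvergenceLemma} applies \emph{directly} (this is where hypothesis \eqref{StateLMI} is used) to give quadratic $0$-boundedness of the state system, hence $V(x(t))\geq 1\Rightarrow\dot V(x(t))<0$. If $e(t)\notin\mathcal{E}_e$, Lemma~\ref{ErrorInvarianceLemma2} with $\gamma(\bar{\mathcal{L}})\leq 0$ drives $e(t)$ into $\mathcal{E}_e$; during that transient, the role of \eqref{StayConnected} is not to make the threshold contract but to bound the possible increase of $V$: the integrand in \eqref{StayConnected} upper-bounds the right side of \eqref{LyapunovBound}, so integrating yields $V(x(t))-V(x(0))<f(t)$, and since $f$ is strictly decreasing this forces eventual convergence to $\mathcal{E}_x$. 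Your proposal should replace the ``threshold-contraction'' argument with this direct appeal to Lemma~\ref{StateConvergenceLemma} and this integral interpretation of \eqref{StayConnected}.
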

\begin{proof}
By applying Lemma \ref{QuadraticBoundLemma} and Definition \ref{QuadraticBoundedness} to the system in \eqref{ErrorDynamics}, the definition of $\gamma(\bar{t}_k)$ in \eqref{GammaDefinition} implies that $\forall w(t)\in W$ and $\forall v(t)\in V$,
\begin{equation}
\medmuskip=1.18mu
\thinmuskip=1.18mu
\thickmuskip=1.18mu
V_e(e(t)) \geq 1 \implies \dot{V}_e(e(t)) < \gamma(\bar{t}_k)V_e(e(t)) ~ \forall t\in[\bar{t}_k,\bar{t}_{k+1}),
\end{equation}
where $V_e(e(t))\triangleq e(t)^T\bar{P}e(t)$. This is equivalent to
\begin{equation}
\begin{split}
&V_e(e(t)) \geq 1 ~ \forall t\in[\bar{t}_k,\bar{t}_{k+1}) \implies \\
&\hspace{2.18cm}V_e(e(\bar{t}_{k+1})) < e^{\gamma(\bar{t}_k)(\bar{t}_{k+1}-\bar{t}_k)}V_e(e(\bar{t}_k)).
\end{split}
\end{equation}
Consequently, $\forall t\in[\bar{t}_k,\bar{t}_{k+1})$,
\begin{equation} 
\medmuskip=2.49mu
\thinmuskip=2.49mu
\thickmuskip=2.49mu
V_e(e(t)) < \max\left(1,e^{\gamma(\bar{t}_k)(t-\bar{t}_k)+\sum_{j=0}^{k-1}\gamma(\bar{t}_j)(\bar{t}_{j+1}-\bar{t}_j)}V_e(e(0))\right).
\end{equation}
If $e(0)\in\mathcal{E}_e$, then $\max_{e(0)}V_e(e(0))=1$, implying that $\forall t\in[\bar{t}_k,\bar{t}_{k+1})$,
\begin{equation}
\label{ErrorBound}
V_e(e(t)) < \max\left(1,e^{\gamma(\bar{t}_k)(t-\bar{t}_k)+\sum_{j=0}^{k-1}\gamma(\bar{t}_j)(\bar{t}_{j+1}-\bar{t}_j)}\right).
\end{equation}
The trigger condition in \eqref{TriggerCondition} can be written equivalently as
\begin{equation}
\medmuskip=0.69mu
\thinmuskip=0.69mu
\thickmuskip=0.69mu
2 - y_i(t)Y_iy_i(t) + \max\left(1,e^{\gamma(\bar{t}_k)(t-\bar{t}_k)+\sum_{j=0}^{k-1}\gamma(\bar{t}_j)(\bar{t}_{j+1}-\bar{t}_j)}\right) \geq 0.
\end{equation}
Note that $\forall w(t)\in W$ and $\forall v(t)\in V$,
\begin{equation}
\medmuskip=1.12mu
\thinmuskip=1.12mu
\thickmuskip=1.12mu
\begin{split}
& 2 - y_i(t)^TY_iy_i(t) + \max\left(1,e^{\gamma(\bar{t}_k)(t-\bar{t}_k)+\sum_{j=0}^{k-1}\gamma(\bar{t}_j)(\bar{t}_{j+1}-\bar{t}_j)}\right) > \\
& w(t)^TQw(t) + v(t)^TRv(t) - y_i(t)^TY_iy_i(t) + V_e(e(t))
\end{split}
\end{equation}
for all $t\in[\bar{t}_k,\bar{t}_{k+1})$, $\forall k\in\mathbb{Z}^+$ according to \eqref{ErrorBound}. Consequently, \eqref{TriggerCondition} functions as an upper bound on the condition in \eqref{SatisfactionCriteria} so that whenever \eqref{SatisfactionCriteria} is not satisfied, the condition in \eqref{TriggerCondition} will be triggered.

If \eqref{LMICondition} is satisfied $\forall i\in\{1,\cdots,N\}$, Theorem \ref{QuadraticBoundednessTheorem} states that $\dot{V}(x(t))<0$ $\forall t$ when \eqref{TriggerCondition} is not triggered for at least one agent, implying that $x(t)$ converges to $\mathcal{E}_x$ when \eqref{TriggerCondition} is not triggered for some $i\in\{1,\cdots,N\}$. In the case where \eqref{TriggerCondition} is triggered $\forall i\in\{1,\cdots,N\}$, all agents will be connected to the network, and either $e(t)\in\mathcal{E}_e$ or $e(t)\not\in\mathcal{E}_e$. If $e(t)\in\mathcal{E}_e$, then Lemma \ref{StateConvergenceLemma} states that if $\exists\alpha_1\geq0$ such that \eqref{StateLMI} is satisfied, then $x(t)$ will converge to $\mathcal{E}_x$. If $e(t)\not\in\mathcal{E}_e$, Lemma \ref{ErrorInvarianceLemma2} states that if $\exists\alpha_3\geq0$ such that \eqref{ErrorLMI2} is satisfied with $\gamma(\bar{\mathcal{L}})\leq0$, then $e(t)$ will converge to $\mathcal{E}_e$. Once $e(t)\in\mathcal{E}_e$, we know from Lemma \ref{StateConvergenceLemma} that $x(t)$ will converge to $\mathcal{E}_x$.

If agent $i$ stays connected to the network when \eqref{StayConnected} is satisfied and all of agent $i$'s neighbors are connected to the network, then $V(x(t))$ will not increase enough to destabilize the system when $e(t)$ is converging to $\mathcal{E}_e$. The integrand in \eqref{StayConnected} functions as an upper bound on the left side of the inequality in \eqref{SatisfactionCriteria}, which in turn is an upper bound on $\dot{V}(x(t))$. Consequently, the condition in \eqref{StayConnected} ensures that when all agents connect to the network, they will stay connected until $V(x(t))-V(x(0))<f(t)$, guaranteeing that $x(t)$ eventually converges to $\mathcal{E}_x$, even when $e(t)\not\in\mathcal{E}_e$.
\end{proof}

Theorem \ref{TriggerTheorem} ensures that when the magnitude of the error $V_e(e(t))$ grows too large, \eqref{TriggerCondition} will be triggered and agent $i$ will connect to the network to communicate with other agents. Note that in \eqref{TriggerCondition} and \eqref{StayConnected}, $y_i(t)$ is locally available to agent $i$, but $\gamma(\bar{t}_j)$ contains some information that is not locally available to agent $i$ since $\gamma(\bar{t}_j)$ is a function of the configuration of the overall communication graph $\mathcal{G}(t)$ $\forall t\in[\bar{t}_j,\bar{t}_{j+1})$. Let $\tilde{\mathcal{L}}_i(t)$ and $\hat{\mathcal{L}}_i(t)$ represent the portions of $\mathcal{L}(t)$ whose values are known and unknown by agent $i$, respectively, so that $\mathcal{L}(t)=\tilde{\mathcal{L}}_i(t)+\hat{\mathcal{L}}_i(t)$ $\forall i\in\{1,\cdots,N\}$. For agent $i$ to evaluate \eqref{TriggerCondition} and \eqref{StayConnected}, it first solves for $\gamma(\bar{t}_j)$ offline for all possible configurations of the communication graph $\mathcal{G}(t)$. Agent $i$ then plugs in $\bar{\gamma}_i(\bar{t}_j)$ for $\gamma(\bar{t}_j)$ in \eqref{TriggerCondition} and \eqref{StayConnected} to evaluate them online, where $\bar{\gamma}_i(\bar{t}_j)$ is given by
\begin{equation}
\medmuskip=0.72mu
\thinmuskip=0.72mu
\thickmuskip=0.72mu
\bar{\gamma}_i(\bar{t}_j)\triangleq\max_{\hat{\mathcal{L}}_i(t)}\gamma(\bar{t}_j)~\text{s.t.}~\mathcal{L}(t)=\tilde{\mathcal{L}}_i(t)+\hat{\mathcal{L}}_i(t)~\forall t\in[\bar{t}_j,\bar{t}_{j+1}).
\end{equation}
In this way, agent $i$ always evaluates the trigger condition with values that result in the right side of the inequality in \eqref{TriggerCondition} being greater than or equal to its actual value. This then functions as an upper bound on the actual value of the condition in \eqref{TriggerCondition}, which is itself an upper bound on the condition in \eqref{SatisfactionCriteria}, implying that whenever \eqref{SatisfactionCriteria} is not satisfied, the condition in \eqref{TriggerCondition} evaluated with $\bar{\gamma}_i(\bar{t}_j)$ will be triggered. Similarly, agent $i$ always evaluates \eqref{StayConnected} with values that result in the left side of the inequality in \eqref{StayConnected} being greater than or equal to its actual value, ensuring that whenever \eqref{StayConnected} is satisfied, the condition in \eqref{StayConnected} evaluated with $\bar{\gamma}_i(\bar{t}_j)$ will also be satisfied.

\begin{remark}
Note that computing $\gamma(\bar{t}_k)$ in \eqref{GammaDefinition} for all possible values of $\mathcal{L}(t)$ requires evaluating $2^N-N$ LMIs since this is the number of possible configurations of the communication graph $\mathcal{G}(t)$ for different sets of agents connected and disconnected from the network. However, all of this computation is completed offline ahead of time and grows with the number of agents $N$, not the number of control inputs $p$ or sensor measurements $m$. Furthermore, \eqref{GammaDefinition} does not need to be evaluated for each $\bar{t}_k$ since the set of possible values for $\mathcal{L}(t)$ is dependent on the configuration of the underlying communication graph $\bar{\mathcal{G}}$ and is therefore time-invariant.
\end{remark}

Theorem \ref{LMITheorem} addresses the case where the offline calculation in \eqref{GammaDefinition} becomes computationally intractable with large $N$. It states that $\gamma(\bar{t}_k)$ will always be the largest when all agents are disconnected from the network, or in other words, $V_e(e(t))$ will grow the fastest when all agents are disconnected from the network. Consequently, each agent can always use this worst-case value for $\gamma(\bar{t}_k)$ when evaluating \eqref{TriggerCondition} and \eqref{StayConnected}, and therefore only $2$ LMIs need to be evaluated in \eqref{GammaDefinition} (the cases where all agents are either connected or disconnected from the network) instead of $2^N-N$ LMIs which are needed in \cite{griffioen2021reducing}.
\begin{theorem}
\label{LMITheorem}
Let $\bar{V}_e(e(t))$ represent the value of $V_e(e(t))$ when all agents are disconnected from the network. Then
\begin{equation}
\dot{V}_e(e(t)) \leq \dot{\bar{V}}_e(e(t)),
\end{equation}
implying that $V_e(e(t))$ increases the most when all agents are disconnected from the network.
\end{theorem}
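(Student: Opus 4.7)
The plan is to establish the pointwise inequality $\dot{V}_e(e(t)) \leq \dot{\bar{V}}_e(e(t))$ along identical $(e,w,v)$ trajectories by a direct algebraic comparison. First, I would differentiate $V_e(e(t)) = e(t)^T\bar{P}e(t)$ along the error dynamics in \eqref{ErrorDynamics} to obtain $\dot{V}_e(e) = e^T[\bar{P}A_e(\mathcal{L}(t)) + A_e(\mathcal{L}(t))^T\bar{P}]e + 2e^T\bar{P}\mathcal{I}w - 2e^T\bar{P}J(\mathcal{L}(t))v$, and then specialize to $\mathcal{L}(t)=0$ to obtain $\dot{\bar{V}}_e(e)$, which forces $\bar{L}_i(l_{ii}(t)) = \hat{L}_i$ for every $i$ and removes the coupling term.

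Subtracting, the process-noise contribution cancels identically, leaving $\dot{V}_e - \dot{\bar{V}}_e = e^T[\bar{P}\Delta A + \Delta A^T\bar{P}]e - 2e^T\bar{P}\Delta J\,v$, where $\Delta A = A_e(\mathcal{L}(t)) - A_e(0) = -\text{BlkDiag}(M_i C_i) - \eta(\mathcal{L}(t)\otimes I_n)$ and $\Delta J = J(\mathcal{L}(t)) - J(0) = \text{BlkDiag}(M_i)$, with $M_i = NL_i - \hat{L}_i$ on those agents whose effective observer gain has switched from $\hat{L}_i$ to $NL_i$ and $M_i = 0$ elsewhere. The Laplacian piece contributes $-2\eta e^T\bar{P}(\mathcal{L}(t)\otimes I_n)e$, while the remaining terms regroup into $-2e^T\bar{P}\,\text{BlkDiag}(M_i)(\text{BlkDiag}(C_i)e + v)$, which is exactly the net feedback produced by the switched observer gains acting on the innovations $C_i e_i + v_i$ at the connected agents.

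The main obstacle will be the sign analysis of this residual, since neither piece is individually sign-definite for a generic $\bar{P}$. To close the argument I would invoke the design conditions already in force: $\mathcal{L}(t)\otimes I_n \succeq 0$ because the underlying graph is undirected and connected, the coupling gain $\eta$ was chosen large enough in Lemma \ref{ErrorInvarianceLemma2} so that the lower-right block of $\bar{A}_e$ is negative definite, and $L$ was designed so that $A-LC$ is Hurwitz. Together these imply that the combined Laplacian-plus-innovation contribution is dissipative with respect to $\bar{P}$, yielding $\dot{V}_e - \dot{\bar{V}}_e \leq 0$ for every admissible $(e,w,v)$, which is the claim; as a corollary, $\gamma(\bar{t}_k)$ is maximized over all admissible configurations by the all-disconnected case, justifying the computational simplification from $2^N-N$ LMIs to only $2$.
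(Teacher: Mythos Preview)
Your decomposition is more careful than the paper's, but the closing step is where the argument breaks down. After isolating the Laplacian piece and the observer-gain residual $-2e^T\bar{P}\,\text{BlkDiag}(M_i)(\text{BlkDiag}(C_i)e+v)$, you simply assert that the listed design facts---$\mathcal{L}(t)\otimes I_n\succeq 0$, $\eta$ large, $A-LC$ Hurwitz---``together imply'' dissipativity with respect to $\bar{P}$. They do not. The $v$-dependent cross term $-2e^T\bar{P}\,\text{BlkDiag}(M_i)v$ is linear in $v$ and therefore has no fixed sign for admissible disturbances; nothing in the Hurwitz property of $A-LC$ or the size of $\eta$ bounds it pointwise. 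Similarly, $-2\eta e^T\bar{P}(\mathcal{L}(t)\otimes I_n)e\le 0$ is equivalent to $(\mathcal{L}(t)\otimes I_n)^T\bar{P}+\bar{P}(\mathcal{L}(t)\otimes I_n)\succeq 0$, which does \emph{not} follow from $\mathcal{L}(t)\succeq 0$ for a generic $\bar{P}\succ 0$; you would need additional structure on $\bar{P}$ (e.g., $\bar{P}$ commuting with $\mathcal{L}(t)\otimes I_n$, or block-diagonal structure) to draw that conclusion.

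By contrast, the paper's proof is a two-line calculation that deals \emph{only} with the Laplacian coupling: it writes $\dot{V}_e=-\eta e^T[(\mathcal{L}\otimes I_n)^T\bar{P}+\bar{P}(\mathcal{L}\otimes I_n)]e+2e^T\bar{P}(F(\mathcal{L}(t))e+\mathcal{I}w-J(\mathcal{L}(t))v)$, asserts the first term is nonnegative because the Laplacian eigenvalues are nonnegative, and identifies the second term directly with $\dot{\bar{V}}_e$. In particular, the paper does \emph{not} treat the observer-gain switch $\hat{L}_i\to NL_i$ as a separate contribution; the residual you call $-2e^T\bar{P}\,\text{BlkDiag}(M_i)(\cdot)$ never appears. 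So your route is genuinely different---and more honest about the switching---but it leaves precisely the hard part unproved, whereas the paper's route sidesteps that term entirely (at the cost of the same unproved positive-semidefiniteness claim on the Laplacian cross term that you also rely on).
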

\begin{proof}
Given the error dynamics for the overall system in \eqref{ErrorDynamics},
\begin{equation}
\begin{split}
\dot{V}_e(e(t)) &= -\eta e(t)^T\left((\mathcal{L}(t)\otimes I_n)^T\bar{P}+\bar{P}(\mathcal{L}(t)\otimes I_n)\right)e(t)\\
&\quad+2e(t)^T\bar{P}\left(F(\mathcal{L}(t))e(t)+\mathcal{I}w(t)-J(\mathcal{L}(t))v(t)\right) \\
&\leq 2e(t)^T\bar{P}\left(F(\mathcal{L}(t))e(t)+\mathcal{I}w(t)-J(\mathcal{L}(t))v(t)\right) \\
&= \dot{\bar{V}}_e(e(t))
\end{split}
\end{equation}
where the inequality follows from the fact that $\eta\geq0$ and that $(\mathcal{L}(t)\otimes I_n)^T\bar{P}+\bar{P}(\mathcal{L}(t)\otimes I_n)$ is positive semidefinite since the eigenvalues of the Laplacian $\mathcal{L}(t)$ are always nonnegative.
\end{proof}

\subsection{Network Connection Procedure}
Algorithm \ref{NetworkConnectionAlgorithm} describes a procedure which ensures that $x(t)$ converges to $\mathcal{E}_x$ as guaranteed by Theorems \ref{QuadraticBoundednessTheorem} and \ref{TriggerTheorem} when \eqref{StateLMI}, \eqref{ErrorLMI2}, and \eqref{LMICondition} are satisfied. The event-triggered communication procedure presented in Algorithm \ref{NetworkConnectionAlgorithm} uses only local information to indicate when agent $i$ needs to connect to the network and communicate with other agents.
\begin{algorithm}[h!]
\small
\caption{Network Connection Procedure for Agent $i$}
\begin{algorithmic}[1]
\NoDo
\NoThen
\State Initialize $\hat{x}_i(0)$ $\forall i\in\{1,\cdots,N\}$ so that $e(0)\in\mathcal{E}_e$
\While {$1$}
\If {\eqref{TriggerCondition} is satisfied}
\State Open network connection
\State Broadcast $\hat{x}_i(t)$ and $\tau_i$ to agents $j\in\mathcal{N}_i$
\State Receive $\tau_j$ from agents $j\in\mathcal{N}_i$ connected to the network
\ParFor {Agents $j\in\mathcal{N}_i$ connected to the network}
\State Send $\{\tilde{\mathcal{L}}_i(\bar{t}_k),\bar{t}_k\}$ $\forall\bar{t}_k\in(\tau_j,t]$ to agent $j$
\State Update $\tilde{\mathcal{L}}_i(\bar{t}_k)$ with information received from $\tilde{\mathcal{L}}_j(\bar{t}_k)$
\Statex \hspace{1.5cm}$\forall\bar{t}_k\in(\tau_i,t]$
\EndParFor
\State Update $\tau_i$
\If {Some agent $j\in\mathcal{N}_i$ is not connected to the network or}
\Statex \hspace{1.3cm}\eqref{StayConnected} is not satisfied
\State Close network connection
\EndIf
\Else
\State Update $\tilde{\mathcal{L}}_i(t)$ with $a_{ij}(t)=a_{ji}(t)=0$
\Statex \hspace{1cm}$\forall j\in\{1,\cdots,N\}$
\EndIf
\State $\dot{\hat{x}}_i(t) = A_{bk}\hat{x}_i(t) + \bar{L}_i(l_{ii}(t))(y_i(t)-C_i\hat{x}_i(t))$
\Statex \hspace{1.6cm}$+\eta\sum_{j=1}^Na_{ij}(t)(\hat{x}_j(t)-\hat{x}_i(t))$
\EndWhile
\end{algorithmic}
\label{NetworkConnectionAlgorithm}
\end{algorithm}
In line 3, agent $i$ uses its local sensor measurements $y_i(t)$ as well as $\bar{\gamma}_i(\bar{t}_k)$ $\forall\bar{t}_k\leq t$ to evaluate the trigger condition in \eqref{TriggerCondition}. If \eqref{TriggerCondition} is satisfied, agent $i$ connects to the network (line 4) and broadcasts $\hat{x}_i(t)$ to its neighboring agents (line 5). Agent $i$ also broadcasts $\tau_i$ to its neighboring agents, where $\tau_i$ represents the most recent time instant where agent $i$ possesses full information about changes in the configuration of the communication graph. After receiving $\tau_j$ from each neighboring agent $j$ currently connected to the network (line 6), agent $i$ sends the information it possesses about communication graph configuration changes from time $\tau_j$ to the current time to each of these agents (line 8). Agent $i$ then uses the information it receives from these agents to update its information about communication graph configuration changes (line 9) as well as to update $\tau_i$ (line 11). If all of agent $i$'s neighbors are connected to the network, agent $i$ uses its local sensor measurements $y_i(t)$ as well as $\bar{\gamma}_i(\bar{t}_k)$ $\forall\bar{t}_k\leq t$ to evaluate the condition in \eqref{StayConnected}. If \eqref{StayConnected} is satisfied, then agent $i$ stays connected to the network and continues to communicate with its neighboring agents. Otherwise agent $i$ disconnects from the network (line 13).

If \eqref{TriggerCondition} is not satisfied, then agent $i$ updates its information about communication graph configuration changes with the fact that it is not connected to the network at time $t$ (line 16). Any state estimates agent $i$ receives from its neighboring agents are used to compute its state estimate $\hat{x}_i(t)$ according to \eqref{StateEstimateDynamics} (line 18). Note that because \eqref{TriggerCondition} functions as an upper bound on the condition in \eqref{SatisfactionCriteria}, agents will connect to the network more often than is necessary.

\begin{remark}
Note that Algorithm \ref{NetworkConnectionAlgorithm} presents a procedure where an agent's sending and receiving capabilities are simultaneously triggered by the condition in \eqref{TriggerCondition}. However, an attack on an agent is initiated through data that is incoming to that agent, not outgoing from that agent. Consequently, data could constantly be broadcast to agents all the time, while \eqref{TriggerCondition} would only be used for deciding when to receive information from other agents. In this case, agent $i$ would continuously broadcast $\hat{x}_i(t)$, $\tau_i$, and $\{\tilde{\mathcal{L}}_i(\bar{t}_k),\bar{t}_k\}$ $\forall\bar{t}_k\in(\min_{j\in\mathcal{N}_i}\tau_j,t]$ to all its neighbors. By doing so, an agent receiving information would possess the state estimates from all its neighbors, reducing that agent's state estimation error compared to the current scenario where only a subset of the neighbors' state estimates may be received. This in turn would decrease the number of times \eqref{TriggerCondition} is triggered since \eqref{TriggerCondition} is a function of the estimation error, further reducing an adversary's window of opportunity to carry out an attack. However, this approach would increase communication costs considerably since all agents would always be broadcasting information. The implementation of this approach, along with an investigation of the tradeoff between overall performance and communication costs, is left for future work.
\end{remark}

\begin{remark}
For the discrete time case, \eqref{StateLMI}, \eqref{ErrorLMI}, \eqref{ErrorLMI2}, \eqref{LMICondition}, \eqref{TriggerCondition}, and \eqref{StayConnected} would be replaced by
\begin{equation}
\medmuskip=3.78mu
\thinmuskip=3.78mu
\thickmuskip=3.78mu
\begin{bmatrix}
(1-2\alpha_1)P-A_{bk}^TPA_{bk} & A_{bk}^TPE & -A_{bk}^TP \\
E^TPA_{bk} & \alpha_1\bar{P}-E^TPE & E^TP \\
-PA_{bk} & PE & \alpha_1Q-P
\end{bmatrix}\succ0,
\end{equation}
\begin{equation}
\tiny
\medmuskip=-2.59mu
\thinmuskip=-2.59mu
\thickmuskip=-2.59mu
\begin{bmatrix}
(\gamma(\mathcal{L}_k)-2\alpha_2)\bar{P}-A_e(\mathcal{L}_k)^T\bar{P}A_e(\mathcal{L}_k) & -A_e(\mathcal{L}_k)^T\bar{P}\mathcal{I} & A_e(\mathcal{L}_k)^T\bar{P}J(\mathcal{L}_k) \\
-\mathcal{I}^T\bar{P}A_e(\mathcal{L}_k) & \alpha_2Q-\mathcal{I}^T\bar{P}\mathcal{I} & \mathcal{I}^T\bar{P}J(\mathcal{L}_k) \\
J(\mathcal{L}_k)^T\bar{P}A_e(\mathcal{L}_k) & J(\mathcal{L}_k)^T\bar{P}\mathcal{I} & \alpha_2R-J(\mathcal{L}_k)^T\bar{P}J(\mathcal{L}_k)
\end{bmatrix}\succ0,
\end{equation}
\begin{equation}
\tiny
\medmuskip=-2.19mu
\thinmuskip=-2.19mu
\thickmuskip=-2.19mu
\label{ErrorLMI2Discrete}
\begin{bmatrix}
(\gamma(\bar{\mathcal{L}})-2\alpha_3)\bar{E}^T\bar{P}\bar{E}-\bar{A}_e^T\bar{E}^T\bar{P}\bar{E}\bar{A}_e & -\bar{A}_e^T\bar{E}^T\bar{P}\bar{E}\bar{W} & \bar{A}_e^T\bar{E}^T\bar{P}\bar{E}\bar{V} \\
-\bar{W}^T\bar{E}^T\bar{P}\bar{E}\bar{A}_e & \alpha_3Q-\bar{W}^T\bar{E}^T\bar{P}\bar{E}\bar{W} & \bar{W}^T\bar{E}^T\bar{P}\bar{E}\bar{V} \\
\bar{V}^T\bar{E}^T\bar{P}\bar{E}\bar{A}_e & \bar{V}^T\bar{E}^T\bar{P}\bar{E}\bar{W} & \alpha_3R-\bar{V}^T\bar{E}^T\bar{P}\bar{E}\bar{V}
\end{bmatrix}\succ0,
\end{equation}
\begin{equation}
\footnotesize
\medmuskip=1.28mu
\thinmuskip=1.28mu
\thickmuskip=1.28mu
\begin{bmatrix}
P-A_{bk}^TPA_{bk}-C_i^TY_iC_i & A_{bk}^TPE & -A_{bk}^TP & -C_i^TY_i\Gamma_i \\
E^TPA_{bk} & \bar{P}-E^TPE & E^TP & 0 \\
-PA_{bk} & PE & Q-P & 0 \\
-\Gamma_i^TY_iC_i & 0 & 0 & R-\Gamma_i^TY_i\Gamma_i
\end{bmatrix} \succeq 0,
\end{equation}
\begin{equation}
y_k^{i^T}Y_iy_k^i \leq 2 + \max\left(1,\prod_{j=0}^{k-1}\gamma(\mathcal{L}_j)\right),
\end{equation}
\begin{equation}
\sum_{\ell=0}^k 2 + \max\left(1,\prod_{j=0}^{\ell-1}\gamma(\mathcal{L}_j)\right) - y_\ell^{i^T}Y_iy_\ell^i > f(k+1),
\end{equation}
respectively, where $\mathcal{L}_k$ represents the Laplacian at time step $k$ and $y_k^i$ denotes agent $i$'s sensor measurements at time step $k$. However, \eqref{ErrorLMI2Discrete} may not be satisfied with $\gamma(\bar{\mathcal{L}})\in[0,1]$ if it is not possible to design the coupling gain $\eta$ so that the eigenvalues of the lower right block of $\bar{A}_e$ lie within the unit circle.
\end{remark}

\subsection{Ensuring Resiliency Against Attacks}
The network connection procedure presented in Algorithm \ref{NetworkConnectionAlgorithm} is sufficient for ensuring that agents connect to the network when necessary to maintain the stability of the overall system in attack-free scenarios. However, during those brief periods of time when various agents are connected to the network, the safety of the overall system against attacks is not guaranteed. Since resilience against attacks is the ultimate goal, a variety of mechanisms and strategies may be used during these brief periods of network connection to guarantee safety and security. For example, software rejuvenation \cite{griffioen2019secure,griffioen2020secure} is one mechanism that has been introduced to guarantee the safety of agents when connecting to the network to maintain stability or recover from a disturbance. The detailed implementation of such a mechanism within the context of the network connection protocol in Algorithm \ref{NetworkConnectionAlgorithm} is beyond the scope of this article and is left for future work. However, to guarantee the safety of the overall system in the presence of attacks, some such resiliency mechanism will need to be implemented during those brief periods of time when various agents connect to the network and share critical information.

\section{Simulation}
To illustrate the effectiveness of the network connection and communication protocol, we consider a smart water distribution system used at a four-hectare wine estate in the south of England \cite{fu2020dynamic}. The goal of the water distribution system is to stabilize the water levels of three district meter area tanks at predesigned constant reference levels. The system state is given by the difference between the reference levels and the current water levels of the three tanks, the control inputs are the open levels of the valves, and the sensors measure the current water levels of the tanks. The global system model is linearized at a reference level of $3$ m as presented in \cite{fu2020dynamic} and is given by
\begin{align}
\dot{x}(t) &=
\begin{bmatrix}
-8.367&0&0\\
0&-6.276&0\\
0&0&-5.020
\end{bmatrix}\times10^{-4}x(t) \\
&\quad+\begin{bmatrix}
0.1068&-0.0371&-0.0371\\
-0.0279&0.0801&-0.0279\\
-0.0223&-0.0223&0.0641
\end{bmatrix}u(t)+w(t), \nonumber\\
y(t) &=
\begin{bmatrix}
1&0&0\\
0&1&0\\
0&0&1
\end{bmatrix}x(t)+v(t).
\end{align}
We let $Q=\frac{10000}{3}I_n$ and $R=\frac{10000}{3}I_m$ so that the process and measurement disturbances are less than $1$ cm for each tank.

A continuous time controller $K$ with three poles at $-1.5$ is designed to stabilize the system when all agents are connected to the network, and continuous time observers $L$ and $\hat{L}_i$ $\forall i\in\{1,\cdots,N\}$ are designed according to \eqref{ObserverDesign} so that the estimation error for the observable states is stabilized. $L$ is designed with three poles at $-100$, $L_i^o$ is designed with a pole at $-15$ $\forall i\in\{1,\cdots,N\}$, and the coupling gain is chosen as $\eta=100000$. The water distribution system is comprised of $N=3$ agents, where each agent has access to one local control input and one local sensor measurement, and the adjacency matrix for the underlying communication graph is given by
\begin{equation}
\bar{\mathcal{A}} =
\begin{bmatrix}
0&1&0\\
1&0&1\\
0&1&0
\end{bmatrix}.
\end{equation}
We solve for $P$, $\bar{P}$, and $Y_i$ $\forall i\in\{1,\cdots,N\}$ so that \eqref{StateLMI}, \eqref{ErrorLMI2}, and \eqref{LMICondition} are satisfied with $\gamma(\bar{\mathcal{L}})\leq0$. The network connection protocol in Algorithm \ref{NetworkConnectionAlgorithm} is executed for each agent from the initial state $x(0)=\begin{bmatrix}10&10&10\end{bmatrix}^T$. To model changes in the setpoint as well as impulsive disturbances, we periodically update the state so that it jumps to a new value, simulating the behavior of a real-world system.

Figures \ref{fig:LyapunovFunction} and \ref{fig:NetworkConnections} depict the Lyapunov function convergence and network connection timeline, respectively, for a particular simulation. As seen in Figure \ref{fig:LyapunovFunction}, the Lyapunov function continually decreases between setpoints, which each cause a jump in the value of the Lyapunov function. Figure \ref{fig:NetworkConnections} depicts the detailed connection timeline for each agent, showing that agents are able to disconnect from the network approximately $49\%$ of the time. This provides less time for adversaries to attack different agents while also ensuring the stability of the overall system when there is no attack.
\begin{figure}[h!]
\centering
\includegraphics[width=\columnwidth]{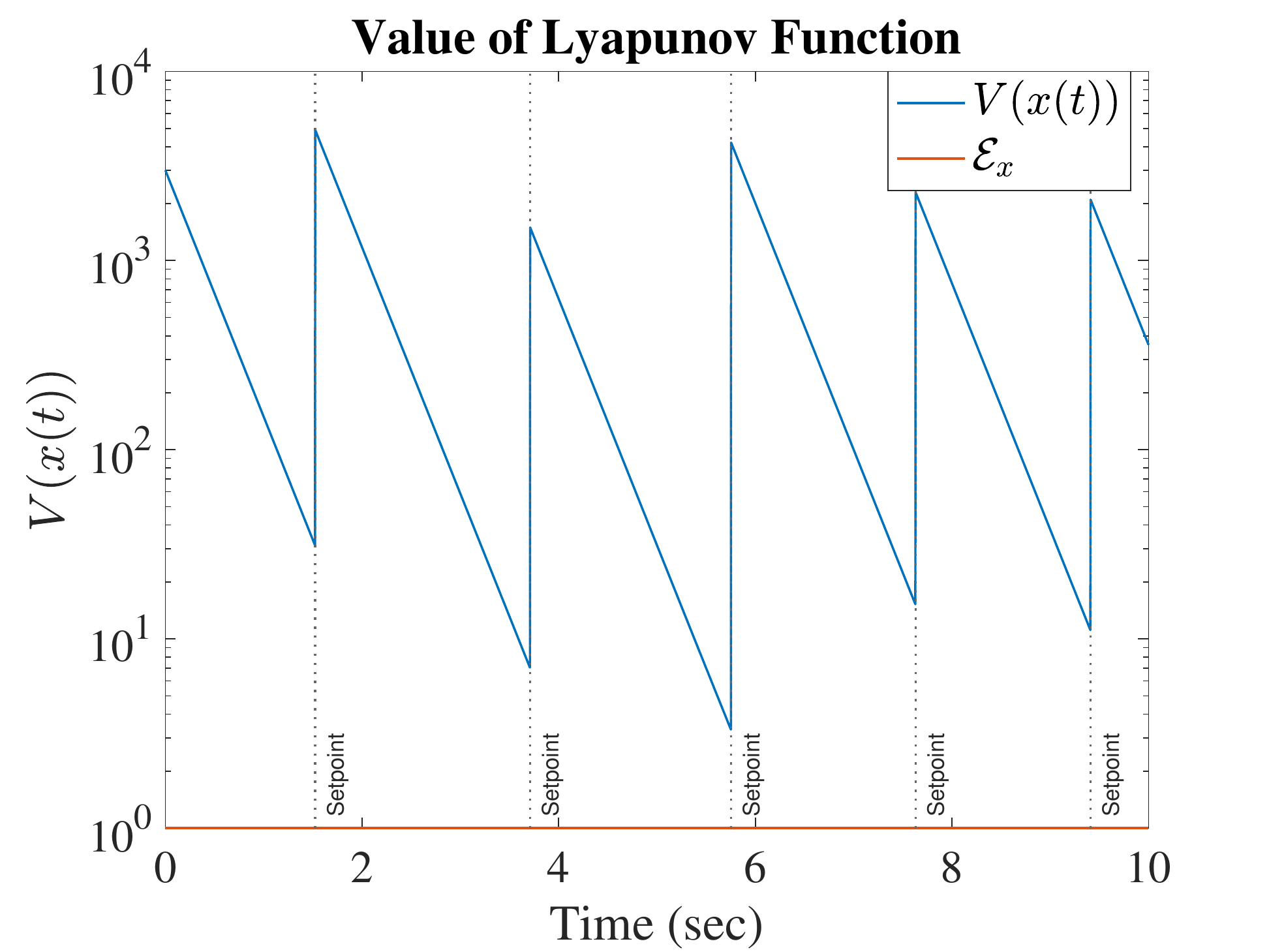}
\caption{Convergence of the Lyapunov function between setpoints.}
\label{fig:LyapunovFunction}
\end{figure}
\begin{figure}[h!]
\includegraphics[width=\columnwidth]{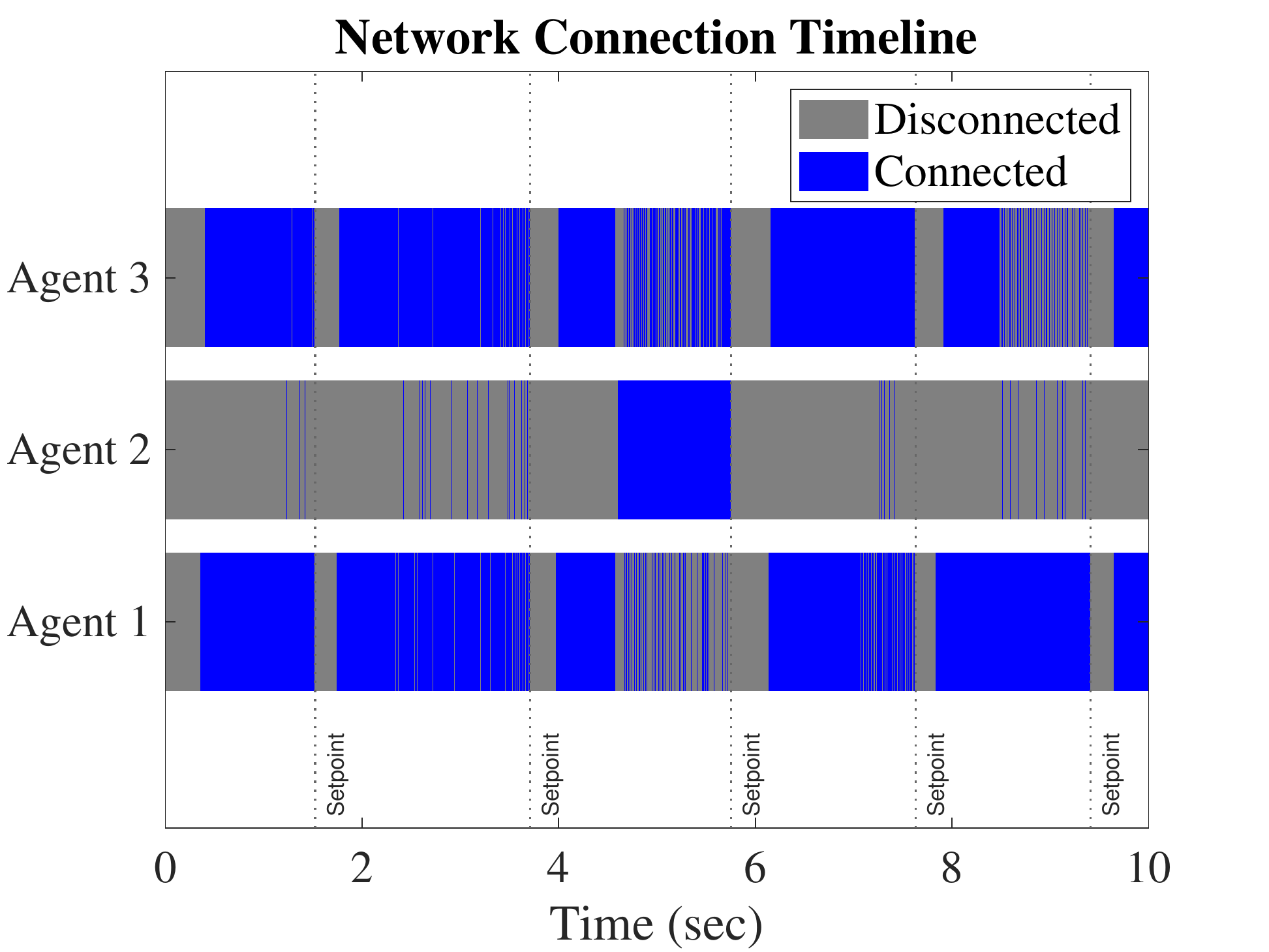}
\caption{Timeline of agents' network connections (agents remain disconnected approximately $49\%$ of the time).}
\label{fig:NetworkConnections}
\end{figure}

Note that the network connection times may vary for each agent since each agent has different sets of local sensors, since the dynamics of some agents may be more tightly coupled to one another than the dynamics of other agents, and since some agents may have more neighbors than other agents. These factors cause some agents to connect to the network more than others to ensure the stability of the overall system. Since agent $2$ is the only agent that has $2$ neighbors, it receives more information each time it connects to the network, and consequently it connects to the network far less than agents $1$ and $3$, as seen in Figure \ref{fig:NetworkConnections}.

In addition, no performance is lost in using the decentralized event-triggered network connection protocol. Over $1000$ trials with no setpoint updates, the time taken to converge to the invariant set $\mathcal{E}_x$ remains the same regardless of whether communication between agents occurs all the time (average convergence time of $2.6728$ sec) or whether agents disconnect from the network for intermittent periods of time (average convergence time of $2.6749$ sec). This average convergence time is significantly less than the average convergence time of $19.982$ sec in \cite{griffioen2021reducing} due to a more efficient state estimation scheme.

\section{Conclusion}
This article has investigated using decentralized event-triggered control to reduce attack opportunities. An event-triggered mechanism for network connection and communication is designed based on only local information. This mechanism ensures the stability of the overall system for attack-free scenarios. It also allows agents to disconnect from the network for periods of time, minimizing an adversary's window of opportunity when attacking different agents. A network connection protocol is designed which uses this event-triggered mechanism, and its effectiveness is illustrated in the context of a smart water distribution system. To ensure safety and security against attacks, future work should introduce resiliency mechanisms for those times when agents are connected to the network and are vulnerable to attacks. Future work also includes considering scenarios where non-negligible communication delays exist when sending data over the network.

\bibliographystyle{IEEEtran}
\bibliography{root}

\begin{IEEEbiography}[{\includegraphics[width=1in,height=1.25in,clip,keepaspectratio]{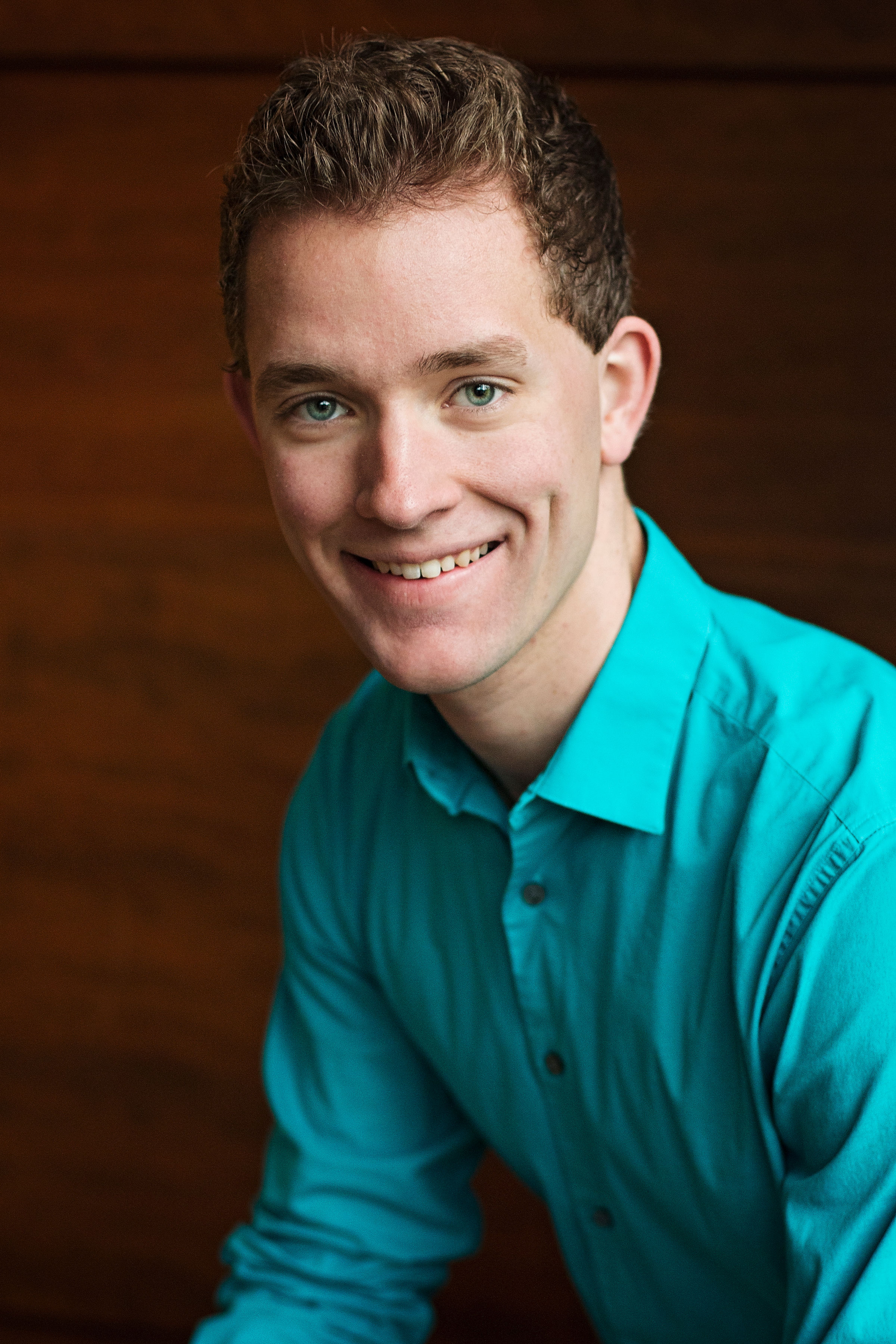}}]{Paul Griffioen}
received the B.S. degree in Engineering, Electrical/Computer concentration, from Calvin College, Grand Rapids, MI, USA in 2016 and the M.S. degree in Electrical and Computer Engineering from Carnegie Mellon University, Pittsburgh, PA, USA in 2018. He is currently pursuing the Ph.D. degree in Electrical and Computer Engineering at Carnegie Mellon University. His research interests include the modeling, analysis, and design of active detection techniques and response mechanisms for ensuring resilient and secure cyber-physical systems.
\end{IEEEbiography}
\begin{IEEEbiography}[{\includegraphics[width=1in,height=1.25in,clip,keepaspectratio]{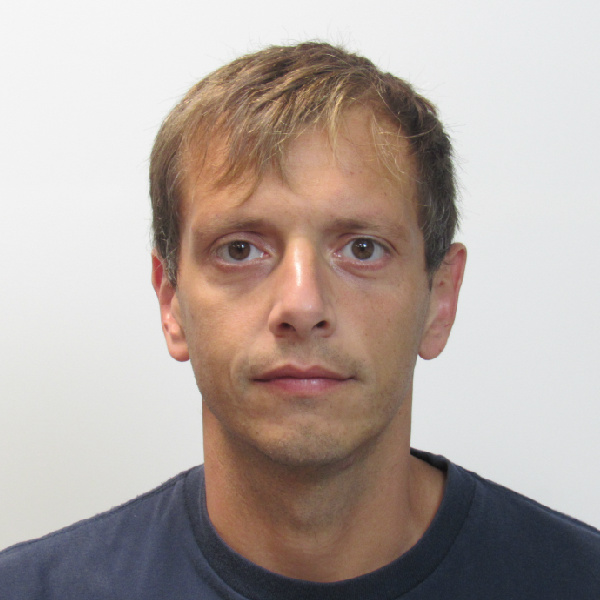}}]{Raffaele Romagnoli}
received the Ph.D. degree in control system and automation specialized in optimal and robust control system theory from the Universit\'{a} Politecnica delle Marche (UNIVPM), Ancona, Italy, in 2015. From 2015 to 2017, he was a Postdoctoral Researcher with the Department of Control Engineering and System Analysis (SAAS), Universit\'{e} Libre de Bruxelles (ULB), Brussels, Belgium. After being a Postdoctoral Researcher with the Department of Electrical and Computer Engineering, Carnegie Mellon University (CMU), Pittsburgh, PA, USA, he is now a Research Scientist in the same institution working on safe and secure control of AI robotics applications and edge computing. He is still currently collaborating with the Software Engineering Institute at CMU where he has been working on secure control for cyber-physical systems.  His other research interests are nonlinear control, control of  biological systems, energy storage (Li-ion batteries), space applications in microgravity conditions, and model inversion.
\end{IEEEbiography}
\begin{IEEEbiography}[{\includegraphics[width=1in,height=1.25in,clip,keepaspectratio]{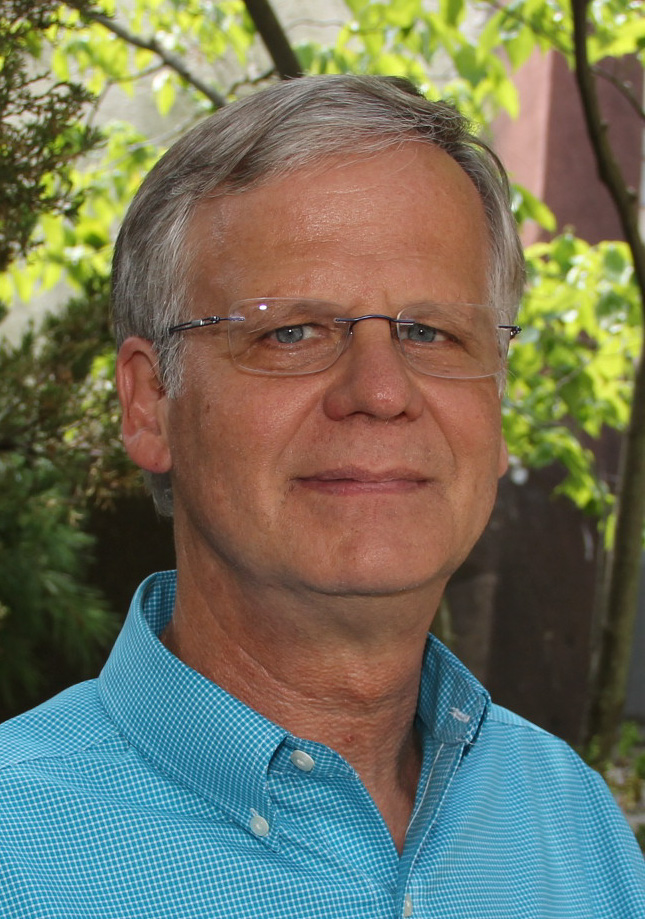}}]{Bruce H. Krogh} is professor emeritus of electrical and computer engineering at Carnegie Mellon University in Pittsburgh, PA, USA, and a member of the technical staff of Carnegie Mellon's Software Research Institute. He was founding director of Carnegie Mellon University-Africa in Kigali, Rwanda. He is chair of the board of the Kigali Collaborative Research Centre (KCRC) in Rwanda and co-lead of the IEEE Continu$\blacktriangleright$ED initiative to develop IEEE's continuing education resources for technical professionals in Africa. Professor Krogh’s research is on the theory and application of control systems, with a current focus on methods for guaranteeing safety and security of cyber-physical systems. He was founding Editor-in-Chief of the \textit{IEEE Transactions on Control Systems Technology}. He is a Life Fellow of the IEEE and a Distinguished Member of the IEEE Control Systems Society.
\end{IEEEbiography}
\begin{IEEEbiography}[{\includegraphics[width=1in,height=1.25in,clip,keepaspectratio]{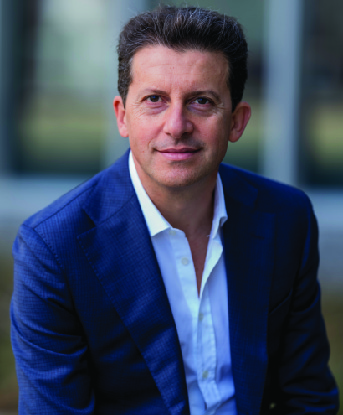}}]{Bruno Sinopoli} is the Das Family Distinguished Professor at Washington University in St. Louis, where he is also the founding director of the center for Trustworthy AI in Cyber-Physical Systems and chair of the Electrical and Systems Engineering Department. He received the Dr. Eng. degree from the University of Padova in 1998 and his M.S. and Ph.D. in Electrical Engineering from the University of California at Berkeley, in 2003 and 2005 respectively. After a postdoctoral position at Stanford University, Dr. Sinopoli was member of the faculty at Carnegie Mellon University from 2007 to 2019, where he was a professor in the Department of Electrical and Computer Engineering with courtesy appointments in Mechanical Engineering and in the Robotics Institute and co-director of the Smart Infrastructure Institute. His research interests include modeling, analysis and design of Resilient Cyber-Physical Systems with applications to Smart Interdependent Infrastructures Systems, such as Energy and Transportation, Internet of Things and control of computing systems.
\end{IEEEbiography}

\end{document}